\newcommand{\ie}{\text{i.e.\@}\xspace}
\newcommand{\eg}{\text{e.g.\@}\xspace}
\newcommand{\textok}{\textsf{ok}}
\newcommand{\textfail}{\textsf{fail}}
\newcommand{\textyes}{\textsf{yes}}
\newcommand{\textno}{\textsf{no}}
\theoremstyle{plain}
\newtheorem{theorem}{Theorem}
\theoremstyle{definition}
\theoremstyle{remark}
\newtheorem{remark}{Remark}
\theoremstyle{definition}
\crefname{assumptionalt}{assumption}{assumptions}
\Crefname{assumptionalt}{Assumption}{Assumptions}
\newenvironment{assumptionp}[1]{
  
  \assumptionalt
}{\endassumptionalt}
\newcommand{\AOE}{\hyperlink{AOE}{\color{darkgray}{\textsc{AOE}}}}
\newcommand{\AOEfull}{\hyperlink{AOE}{\color{darkgray}{{\textsc{Absoluteness of Observed Events}}}}}
\newcommand{\possBorn}{\hyperlink{poss_Born}{\color{darkgray}{\textsc{Born Validity}}}}
\newcommand{\BornCAOE}{\hyperlink{Born_Compat_AOE}{\color{darkgray}{\textsc{Born Compatibility}}}}
\newcommand{\BornCPersK}{\hyperlink{Born_Compat_PersK}{\color{darkgray}{\textsc{Born Compatibility}}}}
\newcommand{\universality}{\hyperlink{Universality}{\color{darkgray}{\textsc{Universality of Quantum Theory}}}}
\newcommand{\PersK}{\hyperlink{Pers_K}{\color{darkgray}{\textsc{Personal Knowledge}}}}
\newcommand{\ClassicalT}{\hyperlink{Classical_T}{\color{darkgray}{\textsc{Classical Agreement}}}}
\newcommand{\ROE}{\hyperlink{ROE}{\color{darkgray}{\textsc{Relativity of Observed Events}}}}
\newcommand{\BornP}{\hyperlink{BornP}{\color{darkgray}{\textsc{Born Practicality}}}}
\def\@printauthorextrainfo#1{
	\csname @authorname#1\endcsname\vphantom{gy}:
	\ifcsdef{author#1emails}
	{%
		\forlistcsloop{\@@spaceafter}{author#1emails}%
	}
	{}%
	\ifcsdef{author#1homepages}
	{%
		\ifcsdef{author#1emails}
		{\unskip, }
		{}%
		\forlistcsloop{\@@spaceafter}{author#1homepages}%
	}
	{}%
	\ifcsdef{author#1thanks}
	{%
		\ifcsdef{author#1emails}
		{%
			\ifcsdef{author#1homepages}
			{\unskip, }
			{\unskip, }%
		}
		{%
			\ifcsdef{author#1homepages}
			{\unskip, }
			{}%
		}%
		\forlistcsloop{\@@spaceafter}{author#1thanks}%
	}
	{}%
}
\begin{document}

\title{A refined Frauchiger--Renner paradox based on strong contextuality}

\author{Laurens Walleghem}
\orcid{0000-0003-4459-1191}
\email{laurens.walleghem@york.ac.uk}
\affiliation{Department of Mathematics, University of York, York, United Kingdom}
\affiliation{INL -- International Iberian Nanotechnology Laboratory, Braga, Portugal}

\author{Rui Soares Barbosa}
\orcid{0000-0002-0465-8518}
\email{rui.soaresbarbosa@inl.int}
\affiliation{INL -- International Iberian Nanotechnology Laboratory, Braga, Portugal}

\author{Matthew F. Pusey}
\orcid{0000-0002-6189-7144}
\email{m@physics.org}
\affiliation{Department of Mathematics, University of York, York, United Kingdom}

\author{Stefan Weigert}
\orcid{0000-0002-6647-3252}
\email{stefan.weigert@york.ac.uk}
\affiliation{Department of Mathematics, University of York, York, United Kingdom}

\maketitle

\begin{abstract}
 The Frauchiger--Renner paradox derives an inconsistency when quantum theory is used to describe the use of itself,
    by means of a scenario where agents model other agents quantumly and reason about each other's knowledge.
    We observe that logical contextuality (\`a la Hardy) is the key ingredient of the FR paradox, and we provide a stronger paradox based on the strongly contextual GHZ--Mermin scenario.
    In contrast to the FR paradox, this GHZ--FR paradox neither requires post-selection nor any reasoning by observers who are modelled quantumly. 
    If one accepts the universality of quantum theory including superobservers, we propose a natural extension of Peres's dictum to resolve these extended Wigner's friend paradoxes.
\end{abstract}


\tableofcontents

\section{Introduction}
In 2016, Frauchiger and Renner (FR) \cite{frauchiger2018quantum} proposed a paradox arguing for the inconsistency of the use of quantum theory to describe itself,
by an extension of the Wigner's friend scenario \cite{wigner1995remarks}.
In it, agents -- referred to as `superobservers' -- describe other agents as quantum systems, and the agents reason about each other's knowledge, leading to a contradiction.
At the heart of the paradox lies the quantum measurement problem \cite{baumann2016measurement,vilasini2022general,kastner2020unitary,vilasini2019multi,brukner2018no},
but consensus about what the FR paradox implies for quantum foundations is still lacking.
We aim to sharpen this question by proposing a closely related paradox leading to a stronger no-go theorem, grounded on the strength of the underlying contextuality.

We start from two observations concerning the FR paradox.
First, by slightly modifying the reasoning stage, a stronger paradox is obtained where reasoning is performed only by \emph{classical} agents.\footnote{By a \textit{classical} agent we mean an agent that need not be modelled quantumly by any other agent in the protocol (see also \Cref{sec:Wignerfriend}).}
Notably, the consistency assumption used in the FR paradox may already be violated in classical paradoxes~\cite{jones2024thinking,walleghem2024stunnedsleepingbeautyprince} as we explain more below; this consistency assumption is weakened significantly in this work and is unproblematic in these classical paradoxes.
Second, the essential feature underlying the FR paradox is the logically nonlocal Hardy model \cite{hardy1993nonlocality}, as already recognised in Refs.~\cite{aaronson2018s,drezet2018wigner,vilasini2019multi,montanhano2023contextuality,vilasini2022general,fortin2019wigner,bub2018defensesingleworldinterpretationquantum,Dourdent_2021}. 
Based on these observations, we propose a stronger paradox whose underlying nonlocality argument is the GHZ--Mermin model \cite{mermin1990simple,greenberger1989going}.
Strong nonlocality of this model means that, compared to FR, the new paradox does away with the need for post-selection on measurement outcomes before running the logical argument.

We use this GHZ--FR paradox to provide two no-go theorems of increasing strength, the GHZ--FR \textit{truth} and \textit{agreement} no-go theorems.
In the truth no-go theorem the emphasis lies on the agents’ points of view about an underlying (absolute) truth, whereas in the agreement no-go theorem the underlying truth is replaced by the consistency condition that classical agents, who classically communicate and meet, must agree on outcome values.
The truth no-go theorem connects to the literature on the Absoluteness of Observed Events,
often invoked in Wigner's friend-type paradoxes \cite{bong2020strong,cavalcanti2021implications,haddara2022possibilistic,leegwater2022greenberger,schmid2023review,brukner2018no,ormrod2022no,ormrod2023theories,nurgalieva2018inadequacy,vilasini2019multi,montanhano2023contextuality,zukowski2021physics,walleghem2023extended,walleghem2024connecting,gao2019quantum,guerin2020no,ying2023relating}, and its assumptions are natural and easy to grasp.
If one wishes to resolve this no-go theorem by rejecting the existence of absolute facts, the agreement no-go theorem suggests that the only agents who may assign an outcome to a measurement are those with direct access to it: the experimenter and by extension any agent learning about its result, \ie any agent obtaining a (copy of) the outcome record.
Of course, other resolutions of the paradox circumventing our no-go theorems are possible, for example by refuting the existence of superobservers as in Refs.~\cite{relano2018decoherence,relano2020decoherence,kastner2020unitary,zukowski2021physics,gambini2019single}, or by further restricting the validity of the Born rule as in Refs.~\cite{lazarovici2019quantum,sudbery2017single}.
Protocols similar to the GHZ--FR paradox have been proposed before in Refs.~\cite{zukowski2021physics,leegwater2022greenberger},
but with a different analysis and weaker no-go theorems. 
A detailed comparison with these protocols and a broader discussion about how our contributions fit within the existing literature can be found in \Cref{sec:previous_work} and \Cref{appendix:related_work}.

\paragraph*{Outline} 
This document is organised as follows.
In \Cref{sec:FR_paradox}, we recap the Wigner's friend \cite{wigner1995remarks}  and FR  \cite{frauchiger2018quantum} scenarios and make two essential observations about the FR paradox.
In \Cref{sec:GHZ_final}, we present the GHZ--FR paradox based on the strongly contextual GHZ--Mermin model.
Next, we carefully analyse the required assumptions for this paradox and derive the GHZ--FR truth and agreement no-go theorems in \Cref{sec:GHZ_FR_nogo}, where we also propose a resolution.
In \Cref{sec:comparison_FR}, we directly compare the GHZ--FR paradox to the original FR paradox, briefly review earlier responses to the FR paradox and comment on related work where protocols similar to the GHZ--FR paradox have been proposed.
We conclude in \Cref{sec:conclusion} with discussing implications of the GHZ--FR paradox and directions for further research.
The appendices contain clarifications of concepts and proofs, and gather some background material.

\section{The role of logical contextuality in the FR paradox} \label{sec:FR_paradox}
We review the original Wigner's friend scenario and the Frauchiger--Renner (FR) paradox.
We observe that the FR paradox
(i) can be strengthened by a slight modification in the reasoning stage, and
(ii) is based on the nonlocal Hardy model.
These observations suggest a stronger FR-like paradox that we describe in the next section.

\subsection{Wigner's friend scenario}\label{sec:Wignerfriend}
The Wigner's friend scenario \cite{wigner1995remarks} is the simplest thought experiment exploring the concept of a superobserver.
It involves Wigner and his friend, Bob, who resides inside a sealed lab and measures a quantum system $S_{\!B}$.
Wigner treats Bob's lab itself as a quantum system, which Wigner, as a superobserver of Bob, may measure and control from the outside.
We denote by $L_{\!B}$ the system consisting of Bob's lab, including Bob himself but excluding $S_{\!B}$.
The setup is represented schematically in \Cref{fig:Wignerfriend}.

\begin{figure}[h]
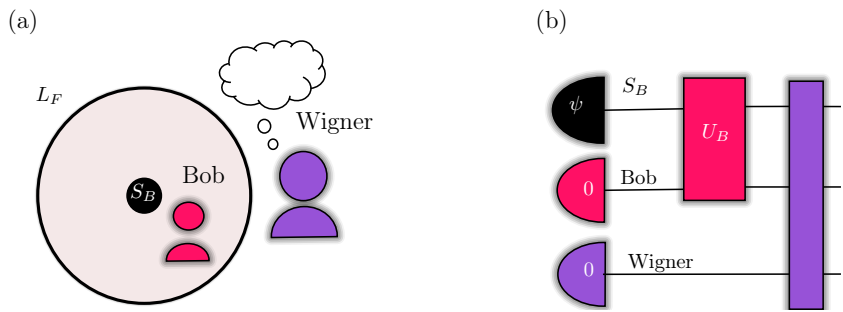
  \centering   \includestandalone[width=0.7\textwidth]{tikzfigures/FINAL_WF} \caption{(a): Schematic sketch of Wigner's friend scenario. Bob performs a measurement on a system $S_{\!B}$ in a sealed lab $L_{\!B}$. Wigner is a superobserver of his friend Bob, modelling Bob's measurement unitarily as $U_B$. The friend sees a definite single-valued outcome, but Wigner describes his friend as being in a superposition of having obtained different outcomes. (b): Circuit representation of a Wigner's friend scenario.} \label{fig:Wignerfriend}
\end{figure}

The quantum system $S_{\!B}$ is initialised in the state $\ket{\psi}_{S_{\!B}} = \alpha \ket{0}_{S_{\!B}} + \beta \ket{1}_{S_{\!B}}$
with $\alpha, \beta \neq 0$.
The friend measures the system $S_B$ in the computational basis $\{ \ket{0}_{S_{\!B}}, \ket{1}_{S_{\!B}} \}$, labelling the outcomes  $0, 1$.
As a \emph{superobserver} for the friend, Wigner models Bob's measurement of the system $S_{\!B}$ as a unitary evolution of the joint quantum system $S_{\!B} L_{\!B}$\footnote{We will often omit tensor product when it is clearly implied, \ie here $S_{\!B} L_{\!B}$ is an abbreviation of $S_{\!B} \otimes L_{\!B}$.}: 
\begin{equation} \label{eq:unitary_friend}
    \ket{\psi}_{S_{\!B}} \ket{r}_{L_{\!B}} \;\;\overset{U_B}{\longmapsto}\;\; 
    \alpha \ket{00}_{S_{\!B} L_{\!B}} + \beta \ket{11}_{S_{\!B} L_{\!B}},
\end{equation}
where $\ket{0}_{L_{\!B}}$ and $\ket{1}_{L_{\!B}}$ correspond to the two possible outcomes as recorded in Bob's memory and elsewhere in his lab, and $\ket{r}_{L_{\!B}}$ denotes the initial (`ready') state of his lab.
From Wigner's perspective, the system $S_{\!B} L_{\!B}$ is in a superposition of
`$S_{\!B}$ is in state $\ket{0}$ and Bob has recorded the measured outcome $0$' and
`$S_{\!B}$ is in state $\ket{1}$ and Bob has recorded the measured outcome $1$'.
Note that the unitary evolution $U_B$ is described (on the relevant degrees of freedom) as a CNOT gate.
The friend sees a definite single-valued outcome, while Wigner would describe his friend as being in a superposition of states corresponding to different outcomes.

Both claims can in principle be tested, as in Deutsch's version of the Wigner's friend argument \cite{deutsch1985quantum}; see also \cite[Section 4]{nurgalieva2020testing}.
Indeed, the fact that the friend has obtained a definite, single-valued outcome can in principle be known to Wigner, namely by the friend passing a note to Wigner stating `I see a definite single-valued outcome' (but crucially without giving away any information about what the outcome actually is).
Wigner's description of the friend as being in a superposition state can also in principle be tested. 
Namely, as a superobserver, Wigner can perform state tomography on the entire friend's lab.\footnote{One way for Wigner to perform such state tomography is the following \cite{nurgalieva2020testing}.
As a superobserver, Wigner can \textit{undo} the friend's measurement, by applying the inverse unitary $U_B^\dagger$ to the friend's lab and the system, thereby resetting them back to their initial state.
Thus, by applying $U_B^\dagger$ and verifying whether the obtained initial state of $S_B L_B$ is indeed $|\psi \rangle_{S_B}|0 \rangle_{L_B}$ Wigner can verify his superposition assignment.}

This thought experiment shows the incompatibility between the existence of superobservers and an absolute, observer-independent notion of collapse \cite{bong2020strong,nurgalieva2018inadequacy,deutsch1985quantum}.
However, even if one takes the wave function collapse not to be absolute, the possibility remains that the underlying fact of the friend's observed outcome be absolute \cite{bong2020strong}.
To see this, suppose Wigner and his friend run the protocol many times. In some rounds Wigner verifies his superposition assignment of \cref{eq:unitary_friend} for the state of his friend Bob's lab, while in other rounds Wigner directly asks his friend Bob for the outcome. 
In that case, the two may disagree on (the timing of) the occurrence of collapse, but they agree on the underlying fact, namely which outcome the friend observed. 
While not asking his friend, Wigner may describe his friend as being in a superposition, but this does not rule out the possibility of there being an absolute fact regarding which outcome the friend obtained, even if unknown to or hidden from Wigner.

This is where extended Wigner's friend scenarios, involving more than one superobserver, come into play.
Since Brukner's scenario \cite{brukner2017quantum}, many different extended Wigner's friend scenarios have been proposed \cite{frauchiger2018quantum,bong2020strong,cavalcanti2021implications,haddara2022possibilistic,leegwater2022greenberger,schmid2023review,brukner2018no,ormrod2022no,ormrod2023theories,nurgalieva2018inadequacy,vilasini2019multi,montanhano2023contextuality,zukowski2021physics,walleghem2023extended,gao2019quantum,guerin2020no,utreras2023allowing,szangolies2020quantum}. 
Among them, the FR paradox \cite{frauchiger2018quantum} has gained much attention.

\subsubsection*{Terminology: (super)observers and (classical) agents} 
For reasons of clarity, we shortly define what we mean by an agent, a measurement and introduce terminology regarding (super)observers and (classical) agents.

\paragraph*{Definition of an agent and measurement} \label{def:agent}
An agent is any system that can obtain information through measurements, and store information. In quantum information theory using the Hilbert space formalism, the ability to store information means that an agent Bob has a personal preferred orthogonal basis in which he can store information. Storing information in these orthogonal `knowledge' states ensure his information states are perfectly distinguishable for him.
Bob needs a preferred basis because one cannot interpret multiple bases as simultaneously encoding information about a quantum system for the same single observer~\cite{brukner2021qubits,allam2023observer}.
By a measurement, we mean an interaction of the agent Bob with a system $S$ that increases the information about $S$ in his memory (\ie in his knowledge states, the orthogonal states in his personal preferred basis).
The agent being able to obtain information thus means that the agent must be able to perform some measurements, and store the outcome in his knowledge states\footnote{Storing an outcome in a preferred basis might lead to production of entropy as in practice it corresponds to a state preparation, see for example \cite{di2021arrow}, and thus might require suitable entropic conditions and an (agential) thermodynamic arrow of time. A preferred basis might also arise from interactions; see for example Ref.~\cite{ormrod2024quantum} for a proposal along these lines.}.

\paragraph*{Terminology and definitions regarding agents} We will use the terms \textit{agent} and \textit{observer} interchangeably. By a \textit{superobserver} we mean an observer who models another agent $A$ quantumly, and can perform arbitrary quantum operations on $A$ and $A$'s whole (closed) lab, including reversing operations previously performed by $A$.
By a \textit{classical} agent we mean an agent that need not be modelled quantumly by any other agent in the protocol,
\ie an agent who is above every other agent's Heisenberg cut in the protocol.
Note that the notion of a \textit{classical agent} is considered relative to a specific protocol $P$:  assuming the existence of superobservers, one can always extend a protocol such that a classical agent in the old protocol is modelled quantumly by another agent in the new protocol and thus is no longer a classical agent in the new protocol.\footnote{Our notion of `classical agent' is closely related to the definition of an `admissible observer' $A$ relative to another observer $O$ and history $h$ of Ref.~\cite{baltag2023logic}, characterised as `[...] (as far as the background observer $O$ can know) [...] none of the information carried by $A$ will be fully erased from the universe at any moment of the given history $h$'. Namely, a classical observer according to our definition is an admissible observer relative to all other observers in the protocol and the history of the protocol.
}

\subsection{The entanglement version of the FR paradox}
\label{sec:FR_overview}
We sketch the entanglement version of the FR paradox as outlined in Ref.~\cite{vilasini2022general}, credited to Lluis Masanes.
For the original prepare-and-measure version of the FR paradox we refer to Refs.~\cite{frauchiger2018quantum,vilasini2022general}.

The scenario is depicted in \Cref{fig:FR_circuit}(a).
It involves two superobservers, Ursula  and Wigner, each with their own `friend', Alice and Bob respectively.
Each of the friends reside in a sealed lab with access to quantum systems $S_{\!A}$ and $S_{\!B}$ respectively, which are initialised in a shared entangled state.
As before, $L_{\!A}$ denotes Alice's lab including Alice herself but excluding the system $S_{\!A}$, and similarly $L_{\!B}$ for Bob.
All the agents -- Alice, Bob, Ursula, and Wigner -- perform measurements, obtaining outcomes $a,b,u,w \in \{0,1\}$ respectively.

\paragraph*{Measurement protocol} 
The measurement protocol, sketched in circuit form in \Cref{fig:FR_circuit}(b), proceeds as follows;
\begin{enumerate}[label=\arabic*.,leftmargin=*]
    \item \textbf{Initialisation.} The system $S_{\!A} S_{\!B}$, shared between Alice and Bob, starts in the specific entangled state $|\psi_0\rangle_{S_{\!A} S_{\!B}} = \frac{1}{\sqrt{3}} \left(\ket{00}+\ket{10}+\ket{11}\right)_{S_{\!A} S_{\!B}}$.
     The initial states of Alice's and Bob's labs are labelled by $\ket{r}_{L_{\!A}}$ and $\ket{r}_{L_{\!B}}$, respectively.
     The initial state of $S_{\!A} L_{\!A} S_{\!B} L_{\!B}$ is given by\footnote{\label{fn:tensorproduct} For notational ease, we will often omit the tensor product symbol: $\ket{00}_{AB}=\ket{0}_A \ket{0}_B = \ket{0}_A \otimes \ket{0}_B$.}
    \begin{align}
    \begin{split}
    \ket{\psi^{t=1}}_{S_{\!A} L_{\!A} S_{\!B} L_{\!B}}
    \;=\;&
    \frac{1}{\sqrt{3}}\left(\ket{0 r 0 r}+\ket{1r0r}+\ket{1r1r}\right)_{S_{\!A} L_{\!A} S_{\!B} L_{\!B}}.
    \end{split}
    \end{align}

\item \textbf{Alice and Bob measure.} Alice measures her system $S_{\!A}$ in the $\{ \ket{0}, \ket{1} \}$ basis, 
    and Bob similarly measures his system $S_{\!B}$.
    The unitary description of these measurements, as in \cref{eq:unitary_friend}, leads to the overall state
    \begin{align}\label{eq:FR_psi_t2}
    \begin{split}
    \ket{\psi^{t=2}}_{S_{\!A} L_{\!A} S_{\!B} L_{\!B}}
    \;=\;& \left(U_{A} \otimes U_{B} \right)\ket{\psi^{t=1}}_{S_{\!A} L_{\!A} S_{\!B} L_{\!B}}
     \\ \;=\;& \frac{1}{\sqrt{3}}(\ket{0000}+\ket{1100}+\ket{1111})_{S_{\!A} L_{\!A} S_{\!B} L_{\!B}}.
    \end{split}
    \end{align}
    where $U_{A}$ and $U_{B}$ are the unitaries describing Alice's and Bob's measurements, respectively.
    \item \textbf{Ursula and Wigner supermeasure.} Finally, the superobservers Ursula and Wigner respectively measure the systems $S_{\!A} L_{\!A}$ and $S_{\!B} L_{\!B}$ in a 
    basis with vectors\footnote{Note that a basis requires two more orthonormal vectors, but their corresponding outcomes have probability zero, so we do not need to specify the remaining vectors. \label{fn:fullbasis}}
    \begin{equation} \ket{\textok}=\frac{1}{\sqrt{2}}(\ket{00}-\ket{11}), \qquad \ket{\textfail}=\frac{1}{\sqrt{2}}(\ket{00} + \ket{11}) .\end{equation}
\end{enumerate}
The protocol is post-selected on both Ursula and Wigner obtaining the outcome $\textok$, denoted $u = \textok$ and $w = \textok$.
Observe that this happens with nonzero probability:
\begin{align}
&\abs*{\bra{\textok}_{S_{\!A} L_{\!A}}\otimes\bra{\textok}_{S_{\!B} L_{\!B}}\ket{\psi^{t=2}}_{{S_{\!A} L_{\!A} S_{\!B} L_{\!B}}}}^2
\\
\;=\; & \abs*{\frac{1}{2\sqrt{3}} (\bra{0000}-\bra{0011}-\bra{1100}+\bra{1111} ) (\ket{0000}+\ket{1100}+\ket{1111})}^2
\\ \;=\; & \abs*{\frac{1}{2\sqrt{3}}}^2 = \frac{1}{12}  \;>\; 0. 
\end{align}

\paragraph*{Reasoning steps}
After a successful post-selected run of the protocol, Ursula reasons (i) about Bob's outcome $b$, (ii) about Bob's conclusion for Alice's outcome $a$, and (iii) about Bob's reasoning about Alice's reasoning about Wigner's outcome $w$.
Assuming that each agent uses the Born rule to make predictions, Ursula, using $u = \textok$, reasons that Bob must have obtained $b=1$, and thus that he would have predicted that $a=1$, so that Alice would have concluded that $w = \textfail$.
In summary,
\begin{equation} u = \textok \;\Rightarrow\; b = 1 \;\Rightarrow\; a = 1 \;\Rightarrow\; w = \textfail, \end{equation}
contradicting the initial post-selection on $u = \textok, w = \textok$.
We give additional details in the next section where we show how the argument may in fact be strengthened.

\begin{figure}
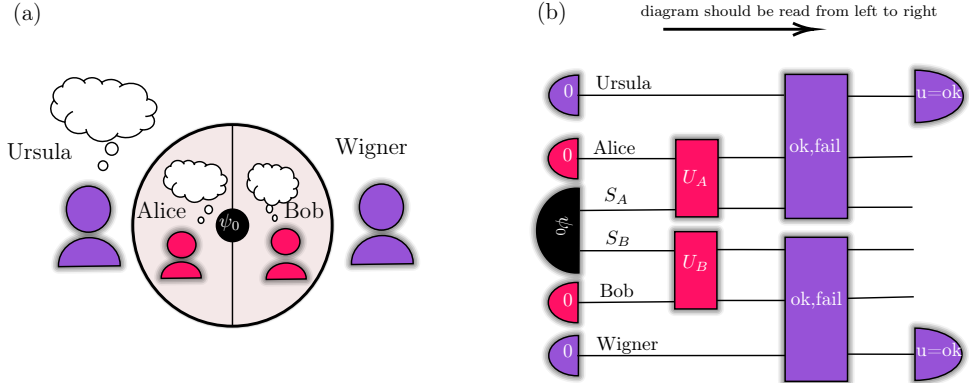
 \centering \includestandalone[width=0.8\textwidth]{tikzfigures/FINAL_FR2}   \caption{(a): Sketch of the FR paradox, where Ursula reasons about Bob who reasons about Alice who reasons about Wigner. (b): Circuit of the entanglement version of the FR paradox, as outlined in \cite{vilasini2022general}, with the initial state of $S_{\!A}\otimes S_{\!B}$ given by $|\psi_0\rangle_{S_{\!A} S_{\!B}} = \frac{1}{\sqrt{3}} \left(\ket{00}+\ket{10}+\ket{11}\right)_{S_{\!A} S_{\!B}}$. 
} \label{fig:FR_circuit} 
\end{figure}

\subsection{A slightly stronger modified FR paradox}
\label{sec:stricter_FR}
We provide a stronger variant of the FR paradox in which only \emph{classical}\footnote{For a definition of a \emph{classical} agent, see \Cref{sec:Wignerfriend}.}
 agents reason.
This technique will be used in the GHZ--FR paradox as well.

The measurement protocol is identical to that of the FR paradox.
But an additional agent, Zeno,\footnote{The additional agent is needed here to reason about the joint outcomes of the friends Alice and Bob, but will not be necessary in the case of the GHZ--FR paradox of \Cref{sec:GHZ_final}.} who does not perform any measurement, makes a prediction for Alice's and Bob's outcomes.
Ursula and Wigner, after obtaining their outcomes, reason about the outcomes obtained by Alice and Bob, respectively.
Ursula, and Wigner hand their predictions to Zeno, who derives a contradiction.\footnote{We note that Ref.~\cite{healey2018quantum} gives a presentation of the reasoning steps of the FR paradox that comes close to the one here.}

Post-selecting on $u=\textok,w=\textok$, which has a non-zero probability of occurring, we have the following predictions from applications of the Born rule: 
\begin{itemize}
    \item Ursula predicts that $b=1$ by excluding $b=0$ since 
\[\bra{\textok}_{S_{\!A} L_{\!A}} \bra{00}_{S_{\!B} L_{\!B}} \ket{\psi^{t=2}}_{S_{\!A} L_{\!A} S_{\!B} L_{\!B}}=0\,.\]

    \item Wigner predicts that $a=0$ by excluding $a=1$ since 
    \[\bra{11}_{S_{\!A} L_{\!A}}\bra{\textok}_{S_{\!B} L_{\!B}} \ket{ \psi^{t=2}}_{S_{\!A} L_{\!A} S_{\!B} L_{\!B}}=0\,\]
    \item Zeno predicts that $a=0, b=1$ can never occur since 
    \[\bra{00}_{S_{\!A} L_{\!A}}\bra{11}_{S_{\!B} L_{\!B}} \ket{\psi^{t=2}}_{S_{\!A} L_{\!A} S_{\!B} L_{\!B}}=0\,. \]
\end{itemize}
Ursula and Wigner classically communicate their predictions to Zeno.
Upon combining the predictions made by Ursula and Wigner with his own, Zeno finds a contradiction.
Thus, all reasoning and predicting is done by classical agents. 
Furthermore, each application of the Born rule refers to measurement outcomes that are obtained concurrently: its predictions could in principle be tested by interrupting the protocol before one or both supermeasurements.

\subsection{Logical non-locality powers the FR paradox}\label{sec:FR_non-locality}
In \Cref{sec:stricter_FR} we have seen how actually a contradiction can be obtained by having only \emph{classical} agents reason in the FR protocol.
The second crucial observation about the FR paradox is to recognise that it is underpinned
by the Hardy model \cite{hardy1993nonlocality}, a proof of ``nonlocality without inequalities''. 
This approach has a logical flavour in that nonlocality is witnessed in terms of \emph{possibilities} alone rather than requiring full knowledge of the \emph{probabilities}.
In fact, Frauchiger and Renner~\cite{frauchiger2018quantum} explicitly stated that their protocol is motivated by Hardy's paradox, a connection that has also been emphasised in subsequent works such as Refs.~\cite{frauchiger2018quantum,aaronson2018s,drezet2018wigner,vilasini2019multi,montanhano2023contextuality,vilasini2022general,fortin2019wigner}.

The Hardy model is a specific quantum realisation of the simple $(2,2,2)$ Bell scenario. This nonlocality scenario involves two parties, each able to choose between two measurement settings, with two outcomes $0$ and $1$.
The Hardy model, on which the FR paradox is based, arises from a specific choice of initial shared state and local measurements in the $(2,2,2)$ Bell scenario.
The two parties share a system in the entangled state  $\sfrac{1}{\sqrt{3}} (\ket{00}+\ket{10}+\ket{11})$,
which each can choose to measure locally in either the $\{\ket{0},\ket{1}\}$ or the $\{\ket{+},\ket{-}\}$ basis, known respectively as the $Z$- and the $X$-basis.\footnotemark

\Cref{fig:FR_Rui} depicts a mapping between the $(2,2,2)$ Bell scenario and the extended Wigner's Friend scenario of the FR paradox.

\begin{figure}
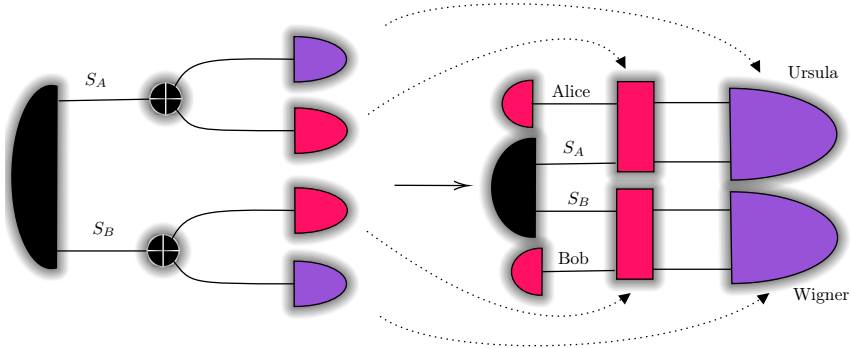
 \centering 
\includestandalone[width=0.7\textwidth]{tikzfigures/tikz_FR_to_Bell}  \caption{Mapping a Bell non-locality scenario (left) to an FR-like extended Wigner's friend scenario (right). In the Bell scenario each party can choose (pictured by a crossed circle) one out of two (incompatible) measurements, for example in the $Z$ basis $\{\ket{0},\ket{1}\}$ or in the $X$ basis $\{\ket{+},\ket{-}\}$.
In the corresponding extended Wigner's friend scenario, for each Bell party, one of the measurements, for example the computational basis $\ket{0},\ket{1}$ measurement, is performed by an observer and modelled unitarily as $U$, whereas the other measurement is performed by the corresponding superobserver in an entangled basis $U |\pm \rangle_{S_{\!F}} |0\rangle_{L_{\!F}}$.} \label{fig:FR_Rui} 
\end{figure}

The key idea is better explained by focusing on a \emph{single} party.
That is an experimenter who, in each run of the experiment, chooses between two (incompatible) dichotomic measurements to perform on their part of a shared system.
Such a situation can be translated into a Wigner's friend scenario:
the friend first performs one of the measurements, and then the superobserver Wigner undoes this measurement and performs the other.
The choice of measurement setting is thus translated to a choice between adopting the point of view of Wigner or that of the friend.
Applying the same idea to each party, the translation extends to the $(n,2,2)$ Bell scenario for any number $n$ of parties.

Bell nonlocality scenarios are a particular case of contextuality scenarios, where maximal measurement contexts correspond to a choice of one measurement for each party.
In the $(2,2,2)$ Bell scenario there are four dichotomic measurements, two for each party. 
We choose to label Alice's measurements as $A$ and $U$ and Bob's as $B$ and $W$, foreshadowing the link to the FR scenario. The maximal measurement contexts are then
\[\{A,B\}, \quad \{U,B\}, \quad \{A,W\}, \quad \{U,W\} .\]
Using the Born rule we calculate the (two-party) joint outcome probabilities, yielding a probability distribution over joint outcomes for each context. 
In fact, in this case, the \emph{possibilistic} version of the Born rule suffices, which only distinguishes between \emph{possible} (probability greater than zero) and \emph{impossible} (probability zero) events.
\Cref{table:FR_table} indicates such possibilistic empirical data for the specific choice of state and measurements in the Hardy model:
each row is labelled by a measurement context and indicates the support of the probability distribution on the joint outcomes for that context; see \eg~\cite{abramsky2011sheaf,abramsky2015contextuality}. 

\footnotetext{We regard $\ket{0}$, $\ket{+}$ as corresponding to the outcome label $0$ and $\ket{1}$, $\ket{-}$ to the label $1$. Note that these are, respectively, the $+1$ eigenvectors and the $-1$ eigenvectors of the $Z$ and $X$ matrices.
Often, when regarding the $Z$ or $X$ matrices as measurement operators, one considers outcomes valued $\pm1$. Here, we use the bijection $+1 \mapsto 0$, $-1 \mapsto 1$.
Particularly relevant for the GHZ--Mermin argument, this bijection actually yields a group isomorphism between $\{+1,-1\}$ under multiplication and $\{0,1\}$ under addition modulo $2$, an instance of the isomorphism between the multiplicative group of complex $n$-th roots of unity  and the additive group of integers modulo $n$.\label{fn:01pm1}}

\newcommand{\tikzxmark}{%
\tikz[scale=0.18] {
    \draw[line width=0.7,line cap=round] (0,0) to [bend left=6] (1,1);
    \draw[line width=0.7,line cap=round] (0.2,0.95) to [bend right=3] (0.8,0.05);
}}
\newcommand{\tikzcmark}{%
\tikz[scale=0.18] {
    \draw[line width=0.7,line cap=round] (0.25,0) to [bend left=10] (1,1);
    \draw[line width=0.8,line cap=round] (0,0.35) to [bend right=1] (0.23,0);
}}

\begin{table}
\centering 
{%
\begin{tabular}{l|cccc|}
\cline{2-5}

                                & \footnotesize{$0,0$}            & \footnotesize{$0,1$}             & \footnotesize{$1,0$}             & \footnotesize{$1,1$}            \\ \hline
\multicolumn{1}{|l|}{$A,B$} & $1$ & 0 & $1$            & $1$  \\ 
\multicolumn{1}{|l|}{$A,W$} & $1$ & $1$             & 0  & $1$  \\ 
\multicolumn{1}{|l|}{$U,B$} & 0 & $1$ & $1$  & $1$               \\ 
\multicolumn{1}{|l|}{$U,W$} & $1$ & $1$ & $1$ & $1$ \\ \hline
\end{tabular}%
}%
\caption{Support of joint outcome probability distributions for each context in the Hardy model or the FR paradox.
The rows are labelled by the four measurement contexts $\{A,B\},\{A,W\},\{U,B\},\{U,W\}$ and columns by the joint outcomes $(0,0),(0,1),(1,0),(1,1)$,
the combination of which gives an event: a specific joint outcome for a specific context of measurements.
The entries in each cell of this possibility table have two possible values: $1$ denotes an event being \textit{possible} while $0$ denotes an event being \textit{impossible}, \ie having zero probability.
Note that in the FR case the outcomes $u$ and $w$ are labelled by $\textok,\textfail$ instead of $0,1$.
This  table is logically contextual:
there is no consistent global assignment of values to $A,B,U,W$ that extends the possible event $\{U \mapsto 0, W \mapsto 0\}$ (in the bottom left);
starting from this partial assignment, row 3 forces $B \mapsto 1$, which used in row 1 gives $A \mapsto 1$, which used in turn in row 2 yields
$W \mapsto 1$, contradicting the initial assignment.} 
~

\label{table:FR_table}
\end{table}

Contextuality, of which non-locality is a special instance, occurs when there is no global probability distribution (on assignments of outcomes to all the measurements in the scenario) that marginalises to the empirical probability distributions within each measurement context.
The Hardy model \cite{hardy1993nonlocality} witnesses \emph{logical} non-locality,
in that non-locality can be ascertained from the possibilistic information collected in \Cref{table:FR_table} alone, \ie at the level of support of outcome probability distributions.
The argument corresponds precisely to the logical reasoning chain in the FR paradox.
Specifically, the joint outcome assignment $\{u \mapsto 0, w \mapsto 0\}$, which occurs with positive probability,
cannot be extended to a global assignment for all $a,b,u,w$ which does not restrict to an impossible event in some other context (see caption of \Cref{table:FR_table} for details).
We note that the troublesome global assignments in the contextuality setting correspond to statements about (absolute) outcomes in FR-like scenarios.

The mapping from a Bell scenario to an extended Wigner's friend scenario as in \Cref{fig:FR_Rui}
provides a recipe for producing extended Wigner's friend paradoxes based on nonlocal models.
We will exploit this recipe in the next section to produce an FR-like paradox based on the GHZ--Mermin model \cite{greenberger1989going,mermin1990simple}.

\section{The GHZ--FR paradox} \label{sec:GHZ_final}
In this section, we first recap the GHZ--Mermin model, the simplest quantum example of strong nonlocality.
We then use it to construct  the GHZ--FR paradox, consisting of a measurement protocol and reasoning steps.
In contrast with the FR paradox, the GHZ--FR paradox requires no post-selection and only classical agents reason and make predictions.

\subsection{Strong contextuality: the GHZ--Mermin argument} \label{sec:GHZ_recap}
\paragraph*{Strong contextuality}
The sheaf-theoretic approach to non-locality and contextuality \cite{abramsky2011sheaf,abramsky2015contextuality,barbosa2015contextuality} introduced a hierarchy of contextuality of increasing strength: probabilistic, logical (or possibilistic), and strong contextuality.
In quantum theory, each of these levels of the hierarchy is exemplified by a well-known model:
\[
    \text{Bell--CHSH (probabilistic)} \;\prec\; \text{Hardy (logical)} \;\prec\; \text{GHZ--Mermin (strong)} .
\]

The contextual model underlying the FR paradox is the logically contextual Hardy model: contextuality is witnessed at the level of the \emph{support} of probability distributions. Namely, there exists a \emph{possible} assignment of outcomes to a context which cannot be extended to a global assignment (to all measurements) consistent with the model, \ie whose restriction to each context is deemed possible.

Nevertheless, in the Hardy model, there exist a local assignment that can be extended globally; \eg the global assignment $\{a \mapsto 1, b \mapsto 1, u \mapsto 1, w \mapsto 1\}$ is consistent with the observed possibilities in \Cref{table:FR_table}.
\emph{Strong contextuality} holds when not a single such global assignment exists.\footnotemark

\footnotetext{This property coincides with \emph{maximal contextuality} in the sense that the fraction of model that can be explained by noncontextual hidden variables is zero, or that the algebraic (no-signalling) maximum of a noncontextuality inequality violation is attained \cite{abramsky2017contextual}.}

\paragraph*{The GHZ--Mermin model}
A paradigmatic example of a quantum-realisable strong contextuality is given by the GHZ--Mermin model \cite{greenberger1989going,mermin1990quantum}, a specific realisation of the $(3,2,2)$ Bell scenario. Following the presentation in Ref.~\cite{abramsky2015contextuality}, we consider three parties, $A$, $B$, and $C$, that share an entangled three-qubit system prepared in the GHZ state,
\begin{equation}
\ket{ \psi^{\textsf{GHZ}} }_{ABC}  = \frac{1}{\sqrt{2}} \left(\ket{000}+\ket{111}\right)_{ABC}.
\end{equation} 
Each party performs either an $X$ or a $Y$ measurement on their qubit,
\ie a measurement in either the $\{\ket{+} , \ket{-}\}$ or the $\{\ket{+i}, \ket{-i}\}$ basis,
where $\ket{\pm } = (\ket{0} \pm \ket{1})/\sqrt{2}$ and $\ket{\pm i} = (\ket{0} \pm i\ket{1})/\sqrt{2}$. This leads to measurements $X_A,Y_A,X_B,Y_B,X_C,Y_C$ with outcomes $x_A, y_A, x_B, y_B, x_C, y_C \in \{0,1\}$.
The outcomes corresponding to $\ket{+}$ and $\ket{+i}$ are labelled $0$, those corresponding to  $\ket{-}$ and $\ket{-i}$ are labelled $1$.

A measurement context in this scenario consists of a choice of measurement for each party. We abbreviate the context $\{X_A, X_B, X_C\}$ as $XXX$ and analogously for the others.
Using the Born rule to distinguish possible from impossible events for measurement contexts $XXX$, $XYY$, $YXY$ and $YYX$, one obtains \Cref{table:GHZ}.
Each row is labelled by a measurement context and indicates the support of a probability distribution on the respective joint outcomes; see \eg~\cite{abramsky2011sheaf,abramsky2015contextuality}.
The outcomes satisfy the parity equations
\begin{equation} \label{eq:GHZ_Mermin}
x_A \oplus x_B \oplus x_C=0, \quad
x_A \oplus y_B \oplus y_C=1, \quad
y_A \oplus x_B \oplus y_C=1, \quad
y_A \oplus y_B \oplus x_C=1, 
\end{equation}
where $\oplus$ denotes addition modulo $2$.\footnote{This argument is also often presented with multiplication of $\pm1$-labelled outcomes; see \cref{fn:01pm1}.}
Assigning definite outcome values to all six measurements simultaneously, summing these four equations yields $0=1$, a contradiction. 
Therefore, no global assignment of outcomes to all six measurements is compatible with the possible observations for all measurement contexts -- the GHZ--Mermin model is strongly contextual.

\begin{table}
\centering
{%
\begin{tabular}{l|cccccccc|}
\cline{2-9}
                                & \footnotesize{$0,0,0$}            & \footnotesize{$0,0,1$}  & \footnotesize{$0,1,0$} & \footnotesize{$0,1,1$} & \footnotesize{$1,0,0$} & \footnotesize{$1,0,1$} & \footnotesize{$1,1,0$} & \footnotesize{$1,1,1$}            \\ \hline
\multicolumn{1}{|l|}{$XXX$} & $1$ & $0$ & $0$ & $1$ & $0$ & $1$ & $1$ & $0$  \\ 
\multicolumn{1}{|l|}{$XYY$} & $0$ & $1$ & $1$ & $0$ & $1$ & $0$ & $0$ & $1$  \\ 
\multicolumn{1}{|l|}{$YXY$} & $0$ & $1$ & $1$ & $0$ & $1$ & $0$ & $0$ & $1$  \\ 
\multicolumn{1}{|l|}{$YYX$} & $0$ & $1$ & $1$ & $0$ & $1$ & $0$ & $0$ & $1$ \\ \hline
\end{tabular}%
}%
\caption{Table  with outcome (im)possibilities for measurements by three parties who share a GHZ state. The rows correspond to the support of probability distributions over the outcomes (columns 1--4) that arise when the measurements specified in the first column are performed ($XXX,XYY,YXY,YYX$). 
}
\label{table:GHZ}
\end{table}

\subsection{Description of the GHZ--FR paradox} \label{sec:GHZ_protocol}

Mapping Bell nonlocality scenarios to extended Wigner's friend scenarios, as described in \Cref{sec:FR_non-locality}, we build an FR-like paradox based on the GHZ--Mermin model.
The strong contextuality of this nonlocal model does away with the need for post-selection in the GHZ--FR paradox.

The scenario is depicted in \Cref{fig:GHZ_final}(a). It involves three observer-superobserver pairs -- Alice and Ursula, Bob and Valentina, and Charlie and Wigner -- corresponding to the three parties in the GHZ--Mermin model.
The friends each reside in a sealed lab with access to quantum systems $S_{\!A}$, $S_{\!B}$, and $S_{\!C}$ respectively, which are initialised in a shared entangled state.
As before, $L_{\!A}$ denotes Alice's lab including Alice herself but excluding the system $S_{\!A}$, and similarly $L_{\!B}$ and $L_{\!C}$ for Bob and Charlie.

Alice, Bob, Charlie and their respective superobservers Ursula, Valentina, and Wigner perform measurements with outcomes $a,b,c,u,v,w \in \{0,1\}$. 
A circuit representation of the measurement protocol is shown in \Cref{fig:GHZ_final}(b).
First, the observers $A, B, C$ perform $Y$-measurements on their part of a GHZ state. 
Then, Ursula performs an $X$-like measurement on Alice and her qubit,
and writes down her predictions for the outcomes of $B, C$.
One intuition for why it might be natural for Ursula to formulate predictions about $b,c$ is in case she does not know about Valentina's and Wigner's actions (or existence even); in that case she believes to still be able to gather Bob's outcome record, for example by asking him for his outcome. 
Valentina and Wigner act analogously.
Once the protocol has been carried out, Ursula, Valentina and Wigner meet up and share their predictions, deriving a contradiction from them, as we will explain below.

\begin{figure}
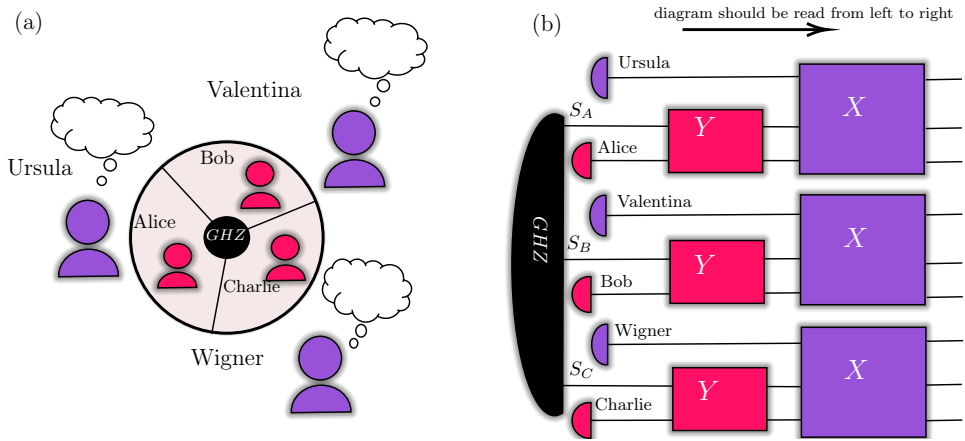
 \centering 
\includestandalone[width=0.8\textwidth]{tikzfigures/FINAL_GHZ_WF}  \caption{(a): Pictorial representation of the scenario GHZ--FR protocol, with superobservers $U,V,W$ and respective observers $A,B,C$, each having access to part of a GHZ state. (b): The GHZ--FR protocol in circuit form. Alice, Bob and Charlie each have access to one part of a three-qubit GHZ state and measure their qubit in the $Y$-basis. Ursula, Valentina and Wigner (red), superobservers for Alice, Bob and Charlie, respectively, each perform an $X$-like measurement on Alice, Bob and Charlie and their qubits.} \label{fig:GHZ_final}
\end{figure}

We first describe the measurement protocol in more detail, after which we present the reasoning steps through which they derive a contradiction.

\paragraph*{Measurement protocol}
The measurement protocol, sketched in circuit form in \Cref{fig:GHZ_final}, proceeds as follows:
\begin{enumerate}[label=\arabic*.,leftmargin=*]
\item \textbf{Initialisation.} The system $S_{\!A} S_{\!B}  S_{\!C}$, shared between Alice, Bob, and Charlie, starts in the GHZ state $|\psi^{\textsf{GHZ}}\rangle_{S_{\!A} S_{\!B}  S_{\!C}}=\frac{1}{\sqrt{2}} \left(\ket{000}+\ket{111}\right)_{S_{\!A} S_{\!B} S_{\!C}}$. The initial states of the three labs are labelled by $\ket{r}_{L_{\!A}}$, $\ket{r}_{L_{\!B}}$, $\ket{r}_{L_{\!C}}$ (with $r$ from `ready').
The initial state of $S_{\!A} L_{\!A} S_{\!B} L_{\!B} S_{\!C} L_{\!C}$ is thus given by
    \begin{equation}
    \begin{split}
    \ket{\psi^{t=1}}_{S_{\!A} L_{\!A} S_{\!B} L_{\!B} S_{\!C} L_{\!C}} = 
    \frac{1}{\sqrt{2}}\left(\ket{0r0r0r}+\ket{1r1r1r}\right)_{S_{\!A} L_{\!A} S_{\!B} L_{\!B} S_{\!C} L_{\!C}}.
    \end{split}
    \end{equation}

\item \textbf{Alice, Bob, and Charlie measure.} Alice, Bob and Charlie perform a measurement in the $Y$-basis $\ket{\pm i}$ on their respective qubit $S_{\!A},S_{\!B},S_{\!C}$, with outcomes $0,1$. Superobserver Ursula describes Alice's measurement by the unitary evolution $U_A$ that,  similar to \Cref{eq:unitary_friend} must satisfy: \begin{equation} \label{eq:GHZ_best_unitary}
    \begin{split}
        \ket{+i }_{S_{\!A}} \ket{r}_{L_{\!A}} \overset{U_A}{\longmapsto} \ket{+i}_{S_{\!A}} \ket{0}_{L_{\!A}}, \qquad
        \ket{-i }_{S_{\!A}} \ket{r}_{L_{\!A}} \overset{U_A}{\longmapsto} \ket{-i }_{S_{\!A}} \ket{1}_{L_{\!A}}
    \end{split}
\end{equation} and similarly for Bob and Charlie where the states $\ket{0}_{L_{\!A}},\ket{1}_{L_{\!A}}$ denote the states of Alice's lab in which she obtains outcomes 0 and 1, respectively. This leads to the overall state
\begin{equation}
    \ket{\psi^{t=2}}_{S_{\!A} L_{\!A} S_{\!B} L_{\!B} S_{\!C} L_{\!C}}
    \;=\; \left(U_{A} \otimes U_{B} \otimes U_{C} \right)\ket{\psi^{t=1}}_{S_{\!A} L_{\!A} S_{\!B} L_{\!B}S_{\!C} L_{\!C}} .
\end{equation}

\item \textbf{Ursula, Valentina, and Wigner supermeasure.}
Finally, the superobservers Ursula, Valentina, and Wigner respectively measure the systems 
$S_{\!A} L_{\!A}$, $S_{\!B} L_{\!B}$, and $S_{\!C} L_{\!C}$ in a basis with vectors\footnote{The two other basis vectors need not be specified since $\ket{+i, 1}_{S_{\!A} L_{\!A}}$  and $\ket{-i, 0}_{S_{\!A} L_{\!A}}$ do not occur.}
    \begin{equation}
     \ket{\textyes}_{S_{\!A} L_{\!A}} =  U_A \ket{+}_{S_{\!A}} \ket{0}_{L_{\!A}}, \qquad \ket{\textno}_{S_{\!A} L_{\!A}} = U_A \ket{-}_{S_{\!A}} \ket{0}_{L_{\!A}} .   
    \end{equation}
 We label the outcomes of $\ket{\textyes}, \ket{\textno}$ by $0,1$.

\end{enumerate}

The probabilities for Ursula's measurement in the $| \textyes \rangle, | \textno\rangle$-basis are equal to those which one would obtain by undoing Alice's measurement on $S_{\!A}$, followed by an $X$-mea\-sure\-ment on the qubit $S_{\!A}$, and similarly for Valentina's and Wigner's measurements. The probabilities for $| \textyes \rangle, | \textno \rangle$ measurements by superobservers thus correspond to probabilities for $X$-measurements on the original GHZ state.
For example, the unitary description of Alice's measurement followed by Ursula's measurement effect yields the probabilities
\begin{equation}
     \abs*{\bra{\pm}_{S_{\!A}} \bra{0}_{L_{\!A}} U_A^\dagger U_A \ket{\ldots}_{S_{\!A}}\ket{0}_{L_{\!A}}}^2 = \abs*{\bra{\pm}_{S_{\!A}}  \bra{0}_{L_{\!A}} \ket{\ldots}_{S_{\!A}}\ket{0}_{L_{\!A}} }^2 = \abs*{\bra{\pm}_{S_{\!A}} \ket{\ldots}_{S_{\!A}}}^2 .
\end{equation}

Possible lightcone structures of this scenario are sketched in \Cref{fig:FR_GHZ_lightcone}.
Our results are independent of the exact causal structure, but we consider the natural causal structure arising from relativistic spacetime.
More generally, we could embed the GHZ--FR scenario into a process-theoretic framework like Categorical Quantum Mechanics \cite{abramsky2009categorical,coecke_kissinger_2017} or the process operator formalism \cite{oreshkov2012quantum}.

\begin{figure}[H]
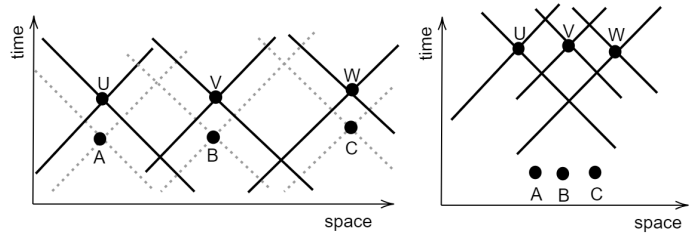

\centering 
\includestandalone[width=0.8\textwidth]{tikzfigures/GHZ_FR_lightcones}
\caption{Possible lightcone structures of the agents in the GHZ--FR scenario, with Ursula, Valentina, and Wigner being superobservers for, respectively, Alice, Bob, and Charlie, who share a GHZ state. The GHZ--FR paradox occurs in both lightcone structures. A mixture of these lightcone structures is also possible. } \label{fig:FR_GHZ_lightcone} 
\end{figure}

\paragraph*{Reasoning steps}
Using the possibilistic Born rule, Ursula reasons about her outcome together with Bob's and Charlie's outcomes, and concludes: ``\textit{Having obtained $u$, I know that $b\oplus c=1 \oplus u$}", since jointly considering the outcomes of Ursula, Bob and Charlie corresponds to the context $XYY$ in the GHZ--Mermin model.
For example, if Ursula obtained $u=0$ then she finds that $b=c=0$ is excluded as: \begin{equation} \begin{split}
    p(u=0,&b=0,c=0) = \operatorname{Tr}\ket{\Psi} \bra{ \Psi } = 0, \\ & \text{where } \ket{\Psi } = \bra{-}_{S_{\!A}} \bra{0}_{L_{\!A}} U_A^\dagger U_A \bra{ +i }_{S_{\!B}} \bra{  +i }_{S_{\!C}}  \ket{\psi^{GHZ}}_{S_{\!A} S_{\!B} S_{\!C}} \ket{000}_{L_{\!A}L_{\!B}L_{\!C}}.   
\end{split}
\end{equation}
Valentina and Wigner make similar reasoning statements, leading to \begin{equation} \label{eq:abc_predictions_GHZ_FR1}
        \begin{split}
            b \oplus c = 1 \oplus u, \quad
            a \oplus c = 1 \oplus v, \quad
            a \oplus b = 1 \oplus w.
        \end{split}
\end{equation} 
Applying the possibilistic Born rule to the  superobservers' measurements with classically recorded outcomes, one obtains \begin{equation} \label{eq:uvw_GHZ}
    u \oplus v \oplus w = 0.
\end{equation}
As Ursula, Valentina and Wigner come together to share their predictions, they find \cref{eq:abc_predictions_GHZ_FR1,eq:uvw_GHZ}. 
These equations correspond to \cref{eq:GHZ_Mermin} from the GHZ--Mermin scenario, with $x_A,x_B,x_C,y_A,y_B,y_C$ substituted for $u,v,w,a,b,c$. Thus they find a contradiction.\footnote{Note that post-selected on $u,v,w$, the GHZ--FR argument gives a Specker's triangle \cite{liang2010specker}. For instance, 
    for $u=v=w=0$ we find \begin{equation}
        b \oplus c = 1, \quad a \oplus c = 1, \quad a \oplus b = 1.
    \end{equation}
}

\section{Two GHZ--FR no-go theorems} \label{sec:GHZ_FR_nogo}
Having described the GHZ--FR paradox, we spell out the required assumptions in detail that lead to a contradiction.
We construct two no-go theorems of increasing strength, which we name the GHZ--FR \textit{truth} and \textit{agreement} no-go theorems. 
The truth no-go theorem is easier to grasp as its assumptions are more natural.
Moreover, it connects to a wider body of literature as it involves the assumption of Absoluteness of Observed Events, stating that performed measurements have absolute, single-valued outcomes, which is often invoked in extended Wigner's friend paradoxes \cite{bong2020strong,cavalcanti2021implications,haddara2022possibilistic,leegwater2022greenberger,schmid2023review,brukner2018no,ormrod2022no,ormrod2023theories,nurgalieva2018inadequacy,vilasini2019multi,montanhano2023contextuality,zukowski2021physics,walleghem2023extended,walleghem2024connecting,gao2019quantum,guerin2020no}.
The agreement no-go theorem replaces Absoluteness of Observed Events by a weaker notion of agreement among classical agents. 
The fact that the paradox remains points towards a new principle to resolve extended Wigner's friend paradoxes. 
For each assumption, we state the general principle and specify its application to the GHZ--FR paradox.

\subsection{GHZ--FR truth no-go theorem} \label{sec:GHZ_nogo}
\subsubsection{Assumptions} \label{sec:subsubsec_truth_nogo}
The name `the GHZ--FR \textit{truth} no-go theorem' refers to the assumption of an underlying `absolute truth' about the observations made in each \emph{single run} of the protocol, formalised as {\AOEfull} (\cref{assumption:AOE}) below
\cite{haddara2022possibilistic,cavalcanti2021implications,bong2020strong}.

We begin by presenting a list of assumptions that capture how we usually think about outcomes of performed measurements in experiments, namely the conjunction of {\possBorn}, {\AOEfull}, and {\BornCAOE}.
Once we have established this common ground, we add the assumption that superobservers exist and that they can be described as in \Cref{sec:Wignerfriend}, \cref{eq:unitary_friend} --, formalised as the assumption of {\universality}.

In quantum theory, the Born rule is used to obtain probability distributions for measurement outcomes.
In situations that involve multiple observers and superobservers,
one may expect the Born rule not to be applicable to all combinations of (actually performed) measurements.
For example, it would be too much to expect it to apply to the measurements of Wigner and his friend, as Wigner may even undo the friend's measurement.
In the present context however, only a minimal use of the Born rule is needed.
We only require the Born rule to hold in situations in which agents believe they could still test their predictions.
For example, in the GHZ--FR protocol Ursula makes predictions for Bob and Charlie, believing that
she could check their outcomes, \ie that she could still bring all relevant outcomes together.
Indeed, for all she knows, Valentina and Wigner might not have performed their measurements yet; in fact, Ursula need not know even about their existence and thinks she can still obtain the outcomes of Bob and Charlie, making it natural for her to formulate a prediction about $b,c$.
By contrast, one \textit{cannot} apply the Born rule to the joint outcomes of measurements performed by a superobserver-observer pair, for example to Valentina's and Bob's measurements, in a single use of the rule as nobody could ever reasonably believe to obtain such outcomes jointly; see also \Cref{fig:example_restricted_Born rule}.
The Born rule may be valid more broadly, but we only require it be applicable in these instances.

Moreover, we only need \emph{possibilistic} use of the Born rule.

\begin{figure}
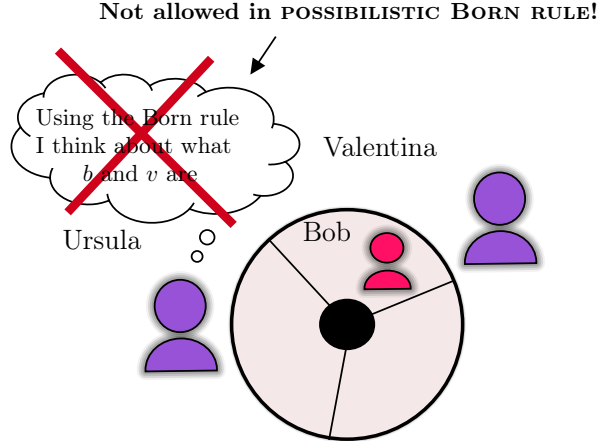

    \centering
    \includestandalone[width=0.45\textwidth]{tikzfigures/FINAL_example_restricted_Born_rule}
    \caption{An example of a reasoning situation following the GHZ--FR protocol as in \Cref{fig:GHZ_final} that is not allowed in {\protect\BornCAOE}: Valentina is a superobserver for Bob, performing a supermeasurement on Bob's lab. Ursula cannot use the Born rule to make a single prediction involving both Bob's and Valentina's outcomes $b$ and $v$, as she can never store both simultaneously in a memory.}
    \label{fig:example_restricted_Born rule}
\end{figure}

The simplest requirement regarding the possibilistic Born rule is to require its experimental validity, \ie that a classical agent can never experimentally disprove the possibilistic Born rule by gathering outcomes of measurements, possibly performed by other agents. We will refer to this as the principle \hypertarget{poss_Born}{\possBorn}.
We assume that the agent using the Born rule has, of course, a correct description of the measurements and states involved.

{\possBorn} allows an agent to make predictions about joint measurement outcomes which are indeed borne out if they gather the outcomes of all involved measurements.
The stronger requirement that such predictions made by an agent be \emph{valid} even if the agent does not actually bring those measurement outcomes together can be phrased as follows:
\begin{quote}
    Every performed measurement has an absolute, single-valued outcome about which the possibilistic Born rule as used by agents provides valid predictions.
\end{quote}
We can split this principle up into two philosophical principles:
the existence of an absolute truth, formalised in {\AOEfull},
and {\possBorn} as used by agents vis-à-vis this absolute truth,
regardless of whether its predictions are verified by gathering all involved outcomes, formalised in {\BornCAOE}.\footnotemark

\footnotetext{In epistemic logic,  {\AOEfull} could be formalised as the Truth axiom \cite{vilasini2019multi,nurgalieva2018inadequacy,montanhano2023contextuality}; see  \Cref{remark:epistemic} ahead for the formulation of key assumptions in our no-go results in the language of epistemic modal logic.}

\begin{assumptionp}{A1}[\hypertarget{AOE}{\AOEfull} ({\AOE})] \label{assumption:AOE}
Every performed measurement has an absolute, single-valued outcome. 

Concretely, in each run of the protocol, there exists an assignment $f: \mathcal{M} \rightarrow O$,
where $\mathcal{M}$ is the (finite) set of all the measurements performed by agents in the protocol,
mapping each performed measurement to its observed outcome.\footnotemark
\end{assumptionp}

\footnotetext{For simplicity, we take all measurements to be valued on the same outcome set $O$. If each measurement $m$ has its own set $O_M$ of (a priori) possible outcomes, then the type of $f$ would more properly be that of a dependent function, or a tuple, $f \in \prod_{m \in \mathcal{M}}O_m$, but we might as well take $O = \cup_{m \in \mathcal{M}} O_m$.}

We now require the absolute outcomes of {\AOE} to be compatible with an agent's use of the Born rule, making the agent's statements obtained from their application of the Born rule true regardless of whether the agent actually brings the involved outcomes together, but only when the agent believes to be able to gather the relevant outcomes in a joint memory.

\begin{assumptionp}{A2}[\hypertarget{Born_Compat_AOE}{\BornCAOE}]\label{assumption:borncompatibility}
    Outcomes assigned per {\AOE} to a set of measurements that an agent $A$ believes she can gather jointly in a memory must be compatible with $A$'s use of the possibilistic Born rule.
    \newline 
    Concretely, the assignment $f: \mathcal{M} \rightarrow O$ per {\AOE} is compatible with all allowed uses of the {\possBorn} by agents. 
\end{assumptionp}

Using {\AOEfull} and {\BornCAOE} an agent may translate their predictions from {\possBorn} to conditions on $f: \mathcal{M} \rightarrow O$.
For example, if the Born rule deems impossible (\ie assigns probability zero to) the joint outcome $(a_1,\ldots,a_n)$ of the measurements $M_1,\ldots,M_n$, then in each run of the protocol the `observed outcome' assignment $f$ cannot satisfy $f(M_i)=a_i$ for all $i=1,\ldots,n$.

Without any further restriction, the {\AOE} assumption is mathematically trivial, \ie a value assignment $f:\mathcal{M}\rightarrow O$ can always be found in each round of a measurement protocol such as the GHZ--FR.
However, {\BornCAOE} introduces restrictions on such value assignments, forcing them to conform with testable predictions, thereby rendering the underlying absoluteness requirement non-trivial.

\begin{remark}[{\AOE} and noncontextuality]
    The assumption {\AOEfull} closely resembles that of noncontextuality in the sheaf-theoretic framework \cite{abramsky2011sheaf}.
    In FR-like paradoxes the different contexts required for a contradiction are operationally brought together through the statements made by different agents.
    Physically, the difference with respect to Bell nonlocality scenarios lies in the fact that in FR-like paradoxes all the measurements that are assigned outcomes are actually performed in each single round of the protocol.
    The assumptions {\AOEfull} and {\BornCAOE} do not yield contradictions in quantum experiments without superobservers,\footnotemark 
    such as standard Bell or contextuality scenarios.
    In such setups, after each run of the experiment there exists a consistent assignment of outcome values to all the measurements that \textit{have been performed} in that run. 
    Therefore, {\AOEfull} does not involve any counterfactual statements about the outcomes of unperformed measurements, as in Bell locality or noncontextuality assumptions, since all involved measurements are actually performed by some agent.
    The only `counterfactual' aspect here is captured by the fact that the {\possBorn} is required to be valid regardless of whether the outcomes of the performed measurements are actually brought together, as per {\BornCAOE}. 
\end{remark}

\footnotetext{Even with a single observer-superobserver pair as in Wigner's original thought experiment, these two assumptions are not contradictory, as after each single run of the experiment there always exists a consistent history of outcome assignments.}
    
The two \cref{assumption:AOE,assumption:borncompatibility} capture how we normally reason in (quantum) experiments.
Our final assumption introduces superobservers.

\begin{assumptionp}{A3}[\hypertarget{Universality}{\universality}]\label{assumption:universality}
An observer can perform a measurement on a quantum system. Furthermore, a sealed lab in which an observer performs a measurement on some system may be described as a closed quantum system that evolves unitarily, upon which a superobserver can apply quantum operations, including measurements.\\
Concretely, this assumption ensures that the GHZ--FR protocol including superobservers (pictured in circuit form in \Cref{fig:GHZ_final}, with the parties situated as in \Cref{fig:FR_GHZ_lightcone}) can be performed. 
More specifically: first, there exists a GHZ state, upon which Alice, Bob, and Charlie can perform their $Y$-measurements; these measurements and everything else happening in each of the observers' sealed labs can be modelled as a known unitary by a superobserver; secondly, superobservers can perform supermeasurements on observers, that is, Ursula, Valentina, and Wigner can perform the $X$-like supermeasurements described in \Cref{sec:GHZ_protocol}. 
\end{assumptionp}

We are now in a position to state the GHZ--FR truth no-go theorem.

\begin{theorem}[GHZ--FR truth no-go theorem] \label{th:GHZ_FR_no_go}
   The GHZ--FR paradox shows that the  {\universality}, {\AOEfull} and {\BornCAOE} are incompatible.
\end{theorem}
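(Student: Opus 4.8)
The plan is to assume all four of {\universality}, {\possBorn}, {\AOEfull}, and {\BornCAOE} hold simultaneously, run the GHZ--FR protocol, and extract from a single round four $\mathbb{Z}_2$-linear equations whose sum reads $0=1$. In other words, the proof is exactly the GHZ--FR argument of \Cref{sec:GHZ_protocol} made rigorous against these assumptions.

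First I would invoke {\universality} to guarantee that the protocol of \Cref{sec:GHZ_protocol} -- the GHZ preparation, the $Y$-measurements by Alice, Bob, Charlie modelled unitarily, and the $X$-like supermeasurements by Ursula, Valentina, Wigner -- can actually be carried out, with the agents situated as in \Cref{fig:FR_GHZ_lightcone}. Fix one run. Writing $\mathcal{M}=\{A,B,C,U,V,W\}$ for the six performed measurements, {\AOE} supplies an assignment $f\colon\mathcal{M}\to\{0,1\}$ recording the outcome of each.

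Next I would harvest the parity constraints on $f$. None of Ursula, Valentina, Wigner, or Zeno is modelled quantumly by anyone, so all four are \emph{classical} in the sense of footnote~\ref{fn:classicalagent}. From Ursula's standpoint the outcomes of $\{U,B,C\}$ can be brought together: she performs $U$, while Bob and Charlie merely record $b,c$ in their labs, and -- not needing to know of Valentina's or Wigner's existence -- she may imagine interrupting the protocol and collecting $b,c$. Via the Bell-to-Wigner's-friend mapping of \Cref{sec:FR_non-locality} and the undo-then-$X$ identity recorded just after the protocol description, the possibilistic Born rule applied to this situation is precisely the GHZ--Mermin context $XYY$: it deems impossible every joint outcome with $u\oplus b\oplus c\neq 1$. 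By {\possBorn} this is a legitimate use of the rule, and by {\BornCAOE} the {\AOE}-assignment must respect it, giving $f(U)\oplus f(B)\oplus f(C)=1$. The same reasoning for Valentina (with $\{V,A,C\}$, context $YXY$) and Wigner (with $\{W,A,B\}$, context $YYX$) yields $f(A)\oplus f(V)\oplus f(C)=1$ and $f(A)\oplus f(B)\oplus f(W)=1$, which together are \cref{eq:abc_predictions_GHZ_FR1}. Finally, the outcomes $u,v,w$ are classically recorded, so $\{U,V,W\}$ can be brought together, e.g.\ from Zeno's perspective; this context is $XXX$, forcing $f(U)\oplus f(V)\oplus f(W)=0$, namely \cref{eq:uvw_GHZ}.

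The contradiction is then immediate: these four equations are exactly \cref{eq:GHZ_Mermin} under the substitution of the excerpt, and on summing them over $\mathbb{Z}_2$ each of $f(A),\dots,f(W)$ occurs in precisely two of them, so the left-hand side vanishes while the right-hand side is $1\oplus 1\oplus 1\oplus 0=1$; hence $0=1$. Zeno, to whom the superobservers communicate their predictions, performs only this purely arithmetic step. I expect the genuinely delicate point to be the justification that each of the sets $\{U,B,C\}$, $\{V,A,C\}$, $\{W,A,B\}$, $\{U,V,W\}$ really ``can be brought together'' in the precise sense of {\possBorn} -- that this is compatible with the causal structure of \Cref{fig:FR_GHZ_lightcone} and with the fact that the superobservers will later act on the observers' labs -- which is exactly why {\possBorn} is phrased only as the impossibility of a classical agent experimentally refuting the rule within the sub-scenario it can itself realise; the remaining Born-rule evaluations reduce to the standard GHZ $X/Y$ probabilities and need no new work.
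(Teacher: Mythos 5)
Your proposal is correct and follows essentially the same route as the paper's own proof: invoke {\universality} to run the protocol, obtain the global assignment $f$ from {\AOE}, extract the three superobserver parity constraints plus the $f(U)\oplus f(V)\oplus f(W)=0$ constraint via {\possBorn} and {\BornCAOE}, and observe that the four $\mathbb{Z}_2$-equations are jointly unsatisfiable. Your added detail on the context identifications ($XYY$, $YXY$, $YYX$, $XXX$) and the explicit mod-2 summation merely spells out what the paper leaves to \Cref{sec:GHZ_protocol} and the GHZ--Mermin argument of \Cref{sec:GHZ_recap}.
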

\begin{proof} 
    The assumption {\universality} ensures that the GHZ--FR measurement protocol can be performed. 
    For the reasoning stage of the paradox, the following assumptions are required. 
    In each run of the protocol, by {\AOEfull}, there exists a function $f \colon \mathcal{M}\to O$ assigning an outcome to each performed measurement.
    Writing $A, B, C, U, V, W$ for the measurements performed in the GHZ--FR paradox, such assignment $f$ consists of outcome values $f(A)$, $f(B)$, $f(C)$, $f(U)$, $f(V)$, $f(W)$, which were denoted by $a,b,c,u,v,w$, respectively, in \Cref{sec:GHZ_protocol}.

    Using {\BornCAOE} (which can be applied since Ursula has no knowledge about Valentina and Wigner and thus believes she is able to gather $u,b,c$ jointly), Ursula reasons that this assignment must satisfy \begin{equation} f(B) \oplus f(C) = 1 \oplus f(U), \end{equation} and similarly Valentina and Wigner respectively reason that
    \begin{align} f(A)  \oplus f(C) &= 1\oplus f(V), \\ f(A) \oplus f(B) &= 1 \oplus f(W). \end{align}
    Finally, applying {\possBorn} (implied by {\BornCAOE}) to the classically recorded outcomes of $U,V,W$, which are available after the protocol, we know that these must satisfy \begin{equation}
        f(U) \oplus f(V) \oplus f(W)=0.
    \end{equation}
    But no assignment $f$ can simultaneously satisfy the four required equations, leading to a contradiction.
\end{proof}

\subsubsection{Ways to resolve the truth no-go theorem} 
\label{sec:ways_to_resolve_truth}

The assumptions of the GHZ--FR truth no-go theorem (\Cref{th:GHZ_FR_no_go}) are shown in \Cref{fig:GHZ_FR_assumptions}.
The theorem forces us to reject at least one of these three assumptions. We discuss each of the four possibilities in turn.

\begin{figure}[H]
\centering
\includestandalone[width=1.0\textwidth]{tikzfigures/assumptions_both_2}  \caption{Assumptions leading to (a) the GHZ--FR truth no-go theorem of \Cref{sec:GHZ_nogo} and (b) the GHZ--FR agreement no-go theorem of \Cref{sec:GHZ_nogo_trust}, where AOE is replaced by Personal Knowledge and Classical Agreement. An arrow from one assumption box to another denotes that the former is used to formulate the latter.} \label{fig:GHZ_FR_assumptions} 
\end{figure}

\paragraph*{a. Reject {\AOEfull}}
Refuting {\AOEfull} means one cannot simply talk about single-valued outcomes obtained in measurements performed by other agents as being absolute, \ie a function $f:\mathcal{M} \rightarrow O$ assigning an outcome to each performed measurement cannot exist. Still, we would like agents to be able to formulate predictions correctly. 
Therefore, we may ask `When can an agent assign a single-valued outcome to a performed measurement?'
Naturally, the experimenter who performs the measurement can assign a single-valued outcome, namely the one they observe. 
This extends also to any agent who learns about the outcome, \ie any agent obtaining a copy of the outcome record and storing it in their memory\footnote{For a more elaborate definition of \emph{learning about an outcome}, see \Cref{appendix:clarific}.}. 
Can other agents assign an outcome -- that is not a priori absolute -- too?
We probe this possibility in the GHZ--FR agreement no-go theorem in the next section, where we weaken {\AOEfull} to allow for each agent to assign a personal single-valued outcome to performed measurements, which may a priori differ among different agents. The stronger no-go result obtained indicates that this approach may not lead to a resolution,
and that perhaps only the experimenter and agents learning about the outcome can assign an outcome value to a performed experiment.

\paragraph*{b. Reject {\BornCAOE}} As {\BornCAOE} assumes {\AOE}, refuting {\AOE} implies refuting {\BornCAOE}.
However, one may also consider accepting {\AOE} but refuting {\BornCAOE}.
In that case, there is still an absolute single-valued outcome assigned to each performed measurement,
but the use of the Born rule per {\possBorn} does not provide valid predictions about such assignment. 
In other words, more cautious use of the Born rule is required. Therefore, one needs an additional prescription for how all agents can use the Born rule. This is the route typically taken by (contextual) hidden variable theories. In this case, the observers' memories may also be described by hidden variables, such as in Bohmian theories \cite{sudbery2017single,lazarovici2019quantum,sudbery2019hidden}, making correct use of the Born rule more subtle.
We will discuss this option further in  \Cref{sec:GHZ_nogo_trust}.

When refuting {\BornCAOE}, one can instead go further as well and refute the experimental validity of the Born rule ({\possBorn}) as well, \ie modify the Born rule for all applications rather than only the non-experimental ones.

\paragraph*{c. Reject {\universality}} 
To reject the existence of superobservers, however, comes with its own complications.
If we adopt the viewpoint that only classical observers can perform measurements, where do we draw the line?
Would a hypothetical quantum computer showing signs of artificial intelligence not qualify as an observer? 
Or would the notion of a classical observer arise as an emergent concept?
Alternatively, one may reject the notion of measurement altogether.

In the foundations of quantum theory, the above questions deal with where to place the Heisenberg cut -- or whether it even exists.
In fact, Wigner's original thought experiment probed precisely this question.
However, with the FR and GHZ--FR paradoxes, the problem gains new importance, due to the stronger implications of accepting superobservers.
Namely, seemingly innocent use of quantum theory leads to contradictory statements regarding not only the question of \textit{when} a measurement does happen and classical outcomes are produced, but also regarding  \textit{what} actual outcome values were observed. 
In short, inconsistent histories may be obtained by different observers.

\medskip

In other words, if one accepts the experimental validity of the (possibilistic) Born rule and the existence of superobservers {\ref{assumption:universality}}, the contradiction in the GHZ--FR paradox arises through the assumed existence of a function $f\colon\mathcal{M} \rightarrow O$ assigning absolute, single-valued outcomes to all performed measurements which are compatible with the agents' allowed uses of the Born rule. 
In that case,  one must refute either the existence of such an absolute assignment $f$, thus rejecting {\AOE} {\ref{assumption:AOE}}, or the compatibility of $f$ with the agents' use of the Born rule, rejecting {\BornCAOE} {\ref{assumption:borncompatibility}} and require more cautious use of the Born rule.

We refer to \Cref{sec:comparison_FR} for a discussion of the FR assumptions, responses in literature and their relevance for the GHZ--FR no-go theorems.

\subsection{GHZ--FR agreement no-go theorem} \label{sec:GHZ_nogo_trust}
\subsubsection{Assumptions} \label{sec:subssubsec_trust_nogo}
We now replace
{\AOEfull} by a weaker condition, split up into two assumptions.
We introduce {\PersK}, which a priori allows different agents to assign different single-valued outcomes to the same measurement,
but subject it to a consistency condition called {\ClassicalT}, which stipulates that a statement known to be true by a classical agent can be taken to be true by any other classical agent to whom it is communicated.
As in the truth no-go theorem,
we require the outcome assignments per {\PersK} to be compatible with the Born rule. 
These assumptions allow superobservers Ursula, Valentina and Wigner to formulate their predictions about outcomes and share them with each other, obtaining a contradiction.

We stress that only classical agents need to reason and communicate;
there is no need for classical agents to trust the reasoning of quantum agents as in \cite{frauchiger2018quantum,nurgalieva2018inadequacy,montanhano2023contextuality}; see also \Cref{remark:epistemic} below.

\begin{assumptionp}{A1$^\prime$}[\hypertarget{Pers_K}{\PersK}]\label{assumption:personalknowledge}
Every performed measurement can be assigned a single-valued outcome by an agent who believes they are able to gather that outcome.\footnote{For example, in the GHZ--FR paradox Ursula applies {\PersK} to assign values to Bob's and Charlie's measurements, whose outcomes she believes she could still obtain as she does not need to know about Valentina's or Wigner's existence or operations.}
\newline
Concretely, in each run of the protocol, for each agent $A$, there exists an assignment $f_A: \mathcal{M}_A \rightarrow O$,
where $\mathcal{M}_A$ is the set of performed measurements whose outcomes could be gathered from $A$'s perspective,
mapping each performed measurement to an outcome value assigned by $A$.
\end{assumptionp}
The agents (superobservers) assigning outcomes in the GHZ--FR paradox only do so \emph{after} having performed their own measurement, \ie based on what they believe is available to them then.

We recall that {\BornCAOE} builds on {\AOE}, but we can also state an analogous Born rule compatibility requirement for this weakened version of {\AOE}.

\begin{assumptionp}{A2$^\prime$(i)}[\hypertarget{Born_Compat_PersK}{\BornCPersK}]\label{assumption:borncompatibility_truth}
Outcomes assigned \textit{by an agent} $A$ per {\PersK} to a set of measurements must be compatible with $A$'s use of the possibilistic Born rule.
\newline
Concretely, for every agent $A$, the assignment $f_A \colon \mathcal{M}_A \rightarrow O$ per {\PersK} is compatible with all allowed uses of the {\possBorn} by $A$.
\end{assumptionp}

In {\PersK}, each function $f_A$ assigns outcomes to measurements, but, crucially, these outcomes need not be absolute but can be relative to agent $A$. 
Using {\BornCPersK}, agent $A$ may translate her predictions from {\possBorn} to conditions on $f_A:\mathcal{M}_A \rightarrow O$. For example, if the Born rule deems impossible the joint outcome $(a_1,\ldots,a_n)$ of the measurements $M_1,\ldots, M_n$, then the assignment $f_A$ cannot satisfy $f_A(M_i) = a_i$ for all $i=1,\ldots,n$. Agent $A$ may conclude:  \begin{quote}
`I (agent $A$) know that there exists a measurement $M_i \in \{M_1,\ldots,M_n\}$, possibly performed by other agents, that does not have outcome $a_i$ (for me),
regardless of whether I ask the involved agent(s) for it.'
\end{quote}

Next, we state a minimal practical requirement on the outcome assignments across different classical agents.
This assumption was not needed for the GHZ--FR truth no-go theorem, as it is implied by {\AOE} and {\BornCAOE}.

\begin{assumptionp}{A2$^\prime$(ii)}[\hypertarget{Classical_T}{\ClassicalT}]\label{assumption:classicaltrust}
The outcome assignments (per \PersK) of classical agents who communicate classically must agree on their overlap.
\newline
Concretely, if classical agents $A$ and $B$ classically communicate their outcome assignments, then these assignments $f_A$ and $f_B$ must agree on their overlap $\mathcal{M}_A \cap \mathcal{M}_B$.
\end{assumptionp}

In fact, the way we use this assumption is that, for example, Ursula and Valentina, must agree on their outcome assignment for the measurement $C$: $f_U(C)=f_V(C)$. In fact, in a version of the protocol where Wigner performs his measurement later, both Ursula and Valentina can ask Charlie for his outcome value. They will obtain the same value, motivating this assumption. In our particular protocol, {\PersK} and {\ClassicalT} will lead to agents $U,V,W$ having to agree on their assignments so that again a single assignment $f$ to all performed measurements can be built that must satisfy the GHZ--Mermin equations.

\begin{theorem}[GHZ--FR agreement no-go theorem] \label{th:GHZ_FR_no_go_trust}
   The GHZ--FR paradox shows that the assumptions  {\universality},  {\PersK}, {\ClassicalT}, and {\BornCPersK}  are incompatible.
\end{theorem}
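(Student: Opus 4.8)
The plan is to mirror the proof of the truth no-go theorem (\Cref{th:GHZ_FR_no_go}), replacing the single global assignment $f$ by the family of agent-relative assignments $f_A$ supplied by {\PersK}, and then using {\ClassicalT} to glue the relevant pieces together into a single inconsistent system of parity equations. First I would invoke {\universality} to guarantee that the GHZ--FR measurement protocol — Alice, Bob, Charlie performing their $Y$-measurements inside sealed labs modelled unitarily, followed by the $X$-like supermeasurements of Ursula, Valentina, Wigner — can actually be carried out, exactly as in \Cref{sec:GHZ_protocol}. Fix one run of the protocol.

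Next I would extract the three superobservers' predictions. By {\PersK}, Ursula has an assignment $f_U$ defined at least on $\{U,B,C\}$, since after her supermeasurement she believes she can still ask Bob and Charlie for their outcomes (she need not even know Valentina and Wigner exist). Applying the {\possBorn} to the context corresponding to $XYY$ in the GHZ--Mermin model and using {\BornCPersK}, Ursula concludes $f_U(B)\oplus f_U(C) = 1\oplus f_U(U)$. Symmetrically, Valentina obtains $f_V(A)\oplus f_V(C) = 1\oplus f_V(V)$ on $\{V,A,C\}$, and Wigner obtains $f_W(A)\oplus f_W(B) = 1\oplus f_W(W)$ on $\{W,A,B\}$. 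Separately, Zeno — a classical agent with access to the classically recorded outcomes $U,V,W$ after the protocol — applies the {\possBorn} to the context $\{U,V,W\}$ (these correspond to the $XXX$ row, since each $\ket{\textyes}/\ket{\textno}$ supermeasurement reproduces the statistics of an $X$-measurement on the original GHZ qubit) and, via {\PersK} and {\BornCPersK} applied to himself, concludes $f_Z(U)\oplus f_Z(V)\oplus f_Z(W)=0$.

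The key new ingredient is reconciling the a priori distinct assignments. All agents involved — Ursula, Valentina, Wigner, Zeno — are classical and communicate their assignments classically to Zeno. By {\ClassicalT}, any two of these assignments agree on the overlap of their domains: in particular $f_U(U)=f_Z(U)$, $f_V(V)=f_Z(V)$, $f_W(W)=f_Z(W)$ (each superobserver and Zeno both have access to the recorded outcome of that superobserver's own measurement), and likewise $f_V(A)=f_W(A)$, $f_U(B)=f_W(B)$, $f_U(C)=f_V(C)$ (the two superobservers who did not perform a measurement on a given friend both claim to be able to obtain that friend's outcome, e.g. in the lightcone variant where Wigner measures last, both Ursula and Valentina can ask Charlie). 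Writing $u:=f_Z(U)$, $v:=f_Z(V)$, $w:=f_Z(W)$ and $a,b,c$ for the common values of the friends' outcomes as agreed via {\ClassicalT}, the four statements collapse to
\begin{align*}
b\oplus c &= 1\oplus u, & a\oplus c &= 1\oplus v, & a\oplus b &= 1\oplus w, & u\oplus v\oplus w &= 0,
\end{align*}
which are precisely \cref{eq:GHZ_Mermin} under the substitution $(x_A,x_B,x_C,y_A,y_B,y_C)\mapsto(u,v,w,a,b,c)$. Summing the first three modulo $2$ gives $0 = 1\oplus(u\oplus v\oplus w) = 1$, a contradiction.

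I expect the main obstacle to be not the algebra, which is identical to the GHZ--Mermin parity argument, but the careful bookkeeping of \emph{which} overlaps {\ClassicalT} is actually entitled to identify, and justifying that the relevant domains $\mathcal{M}_U,\mathcal{M}_V,\mathcal{M}_W,\mathcal{M}_Z$ really do contain the measurements needed — i.e. that each superobserver's {\PersK}-domain includes the two friends' measurements she reasons about, and that Zeno's domain includes $U,V,W$. This is where the lightcone discussion of \Cref{fig:FR_GHZ_lightcone} and the remark that a superobserver ``need not know about the others'' does the work; one should note, as the text already does for the truth theorem, that the {\possBorn} is only ever applied to measurements whose outcomes could in principle be gathered by the reasoning agent, so no illegitimate use of the Born rule across an observer--superobserver pair (as in \Cref{fig:example_restricted_Born rule}) is made.
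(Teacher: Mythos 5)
Your proposal is correct and follows essentially the same route as the paper's own proof: invoke {\universality} to run the protocol, use {\PersK} and {\BornCPersK} to obtain the three superobservers' parity constraints and Zeno's constraint on $U,V,W$, then use {\ClassicalT} to identify the assignments on their overlaps and derive the GHZ--Mermin parity contradiction. The only difference is cosmetic: you make the domain bookkeeping for $\mathcal{M}_U,\mathcal{M}_V,\mathcal{M}_W,\mathcal{M}_Z$ slightly more explicit than the paper does, which the paper relegates to footnotes.
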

\begin{proof}
    The first part of the proof is analogous to the proof of \Cref{th:GHZ_FR_no_go}, replacing the absolute outcome assignment $f$ by the outcome assignments $f_U,f_V,f_W$ of Ursula, Valentina and Wigner per {\PersK}. For example, as Ursula has no knowledge about Valentina and Wigner and thus believes she is able to gather $u,b,c$ jointly, she can assign them outcome values $f_U(B),f_U(C)$ (and $f_U(U)$ of course), compatible with her Born rule use per {\PersK} and {\BornCPersK}. Then {\ClassicalT} requires these outcome assignments to agree on their overlaps as Ursula, Valentina and Wigner are treated classically in the protocol and can physically meet and communicate as much as they like, so that a single global outcome assignment can be built that satisfies the GHZ--Mermin equations. Details can be found in \Cref{appendix:proof_agreement_nogo}.
\end{proof}

\begin{remark}[Heisenberg cuts]
    In literature, sometimes $A$ applying the Born rule to set of jointly performed measurements is also referred to as $A$ choosing a Heisenberg cut, having the performed measurements outside the cut. In that terminology, our assumptions of {\PersK} and {\BornCPersK} can also be phrased as: `An agent $A$ can choose a Heisenberg cut, placing a set of measurements she thinks she can still gather jointly all outside the cut.' The paradox then arises if it is understood that different choices of Heisenberg cuts, when chosen by classical agents, must not lead to contradictions about the actual outcome values, which is another way of phrasing {\ClassicalT}. 
    Interpretations of quantum mechanics which resolve the measurement problem by determining or eliminating the Heisenberg cut will resolve EWF arguments by providing an unambiguous description for the entire protocol.
   We discuss Heisenberg cuts in more detail in \Cref{sec:previous_work}.
\end{remark}

\begin{remark}[The GHZ--FR assumptions in  modal logic] \label{remark:epistemic}
    We briefly outline a discussion of the GHZ--FR paradox in terms of epistemic logic, and refer to \Cref{appendix:epistemic_logic} for more details.

    Syntactically, in epistemic logic \cite{fagin2004reasoning,blackburn2001modal}, besides the usual logical connectives, formulas can be built using a unary modal operator $K_A$ for each agent $A$, with a formula $K_A \phi$ interpreted as `agent $A$ believes or knows that $\phi$ is True'. 
    Different sets of axioms are typically considered in epistemic logic systems, varying with the intended interpretation.
    
    In Ref.~\cite{nurgalieva2018inadequacy}, the \textit{Distribution Axiom} of modal logic (a.k.a.\ axiom \textbf{K}) is listed as assumption (D) of the FR paradox
    (see \Cref{sec:comparison_FR}), and
    used to justify the use of \textit{modus ponens} by quantum agents in the FR paradox. 
    In the GHZ--FR paradox, however, as in the version of the FR paradox from \Cref{sec:stricter_FR}, only classical agents need to reason, 
    dispensing with the reasoning by quantum agents considered in the earlier treatments of the FR paradox \cite{nurgalieva2018inadequacy,vilasini2019multi,montanhano2023contextuality,frauchiger2018quantum}.
    The fact that we need only consider reasoning by classical agents in the protocol is the reason why we do not state such reasoning as a separate assumption in our no-go theorems.
    In modal logical terms, this is reflected in the fact that we do not need the epistemic modalities $K_A$ for quantum agents:
    the whole argument can be phrased in a logical system with modalities only for each \textit{classical} agent in the protocol. 
    It could perhaps be argued that this somewhat weakens the appeal of a formalisation of the paradoxes using modal logic.

    Still, the core assumptions of our GHZ--FR no-go theorems can be phrased in these terms.
    The existence of an absolute truth, per \AOEfull, in the GHZ--FR truth no-go theorem corresponds to the Truth axiom: if an agent \textit{knows} something, then it is true: $K_A \phi \rightarrow \phi$. 
    The agreement no-go theorem requires a form of agreement only among classical agents.
    It can be seen as a considerable weakening of the epistemic trust structures identified for the FR paradox in \cite{nurgalieva2018inadequacy,vilasini2019multi,montanhano2023contextuality}, in that we do not need classical agents to believe the reasoning of quantum agents.
\end{remark}

\subsubsection{Ways to resolve the agreement no-go theorem} \label{sec:ways_to_resolve_trust}
The truth and agreement no-go theorems differ only in that the {\AOE} assumption in the truth no-go theorem is replaced by {\PersK} and {\ClassicalT} in the agreement no-go theorem (cf. \Cref{fig:GHZ_FR_assumptions}).
If one opts to resolve the truth no-go theorem by rejecting {\AOE}, then one must reject one of the assumptions {\PersK} and {\ClassicalT} to resolve the agreement no-go theorem.
We argue that, in that case, the most practical resolution is to further weaken {\PersK}.
Other resolutions are of course possible.

We now discuss the repercussions of separately rejecting each of the assumptions of the agreement no-go theorem. The consequences of rejecting {\universality} are the same as in the truth no-go theorem, as discussed in \Cref{sec:ways_to_resolve_truth}.

\paragraph*{a. Reject {\PersK}}\label{para:reject_Pers_Knowledge} 
Rejecting {\PersK} implies that one also rejects {\AOEfull} of the truth no-go theorem.
If the outcome of a performed measurement is not absolute, one must ask: `Who can assign a single-valued outcome to a performed measurement?'\footnote{As the outcome of a performed measurement is considered to be classical information, a physical entity assigning an outcome to a measurement
needs a preferred basis in its quantum storage system, as argued in \cite{brukner2021qubits,allam2023observer}, which consitutes our definition of an agent.}
Certainly an experimenter can assign a single value to the outcome of a measurement performed by themself and test the Born rule in this way -- but who else?

Suppose that an agent $A$'s prediction about outcomes of performed measurements is valid `no matter whether $A$ actually brings the involved outcomes together'. Then, one is assuming the existence of absolute outcomes or outcomes relative to the agent $A$ as in {\PersK}.
Therefore, rejecting {\AOE} and {\PersK} suggests that only the experimenter themself can assign a single-valued outcome to a performed measurement, and by extension any other agent learning about the outcome.
This proposal suggests a new principle: 
 {\ROE}\footnote{
 The seed of this principle originated in a discussion of L.W. with E. Cavalcanti about the implications of the Local Friends no-go theorem \cite{bong2020strong,cavalcanti2021implications}, but it is not necessarily what E. G. Cavalcanti believes to be a resolution.}:
\begin{quote}
    \hypertarget{ROE}{\ROE}
    Every performed measurement can be assigned a single-valued outcome by an agent who learns about that outcome.
    \end{quote}

Consequently, an agent $A$ can reason about the outcome $b$ of a measurement performed by another agent $B$ only to make statements about the value of this outcome $b$ \textit{conditioned on} agent $A$ asking $B$ for it, formalised in the following principle:
\begin{quote}
\hypertarget{BornP}{\BornP} An agent $A$ can translate predictions based on the Born rule only into statements about outcomes that $A$ would obtain when $A$ learns about these outcomes.

\end{quote}
To \emph{learn about an outcome}, an agent may (i) directly ask the experimenter, (ii) measure the measured system in the same measurement basis or (iii) indirectly obtain the measurement outcome through traces in the environment, be told about it by another agent, \ldots

\paragraph*{b. Reject {\BornCPersK}} \label{par:reject_Born_compatibility} If one rejects Born Compatibility, one can still keep {\AOE} (and {\PersK}).
However, in that case, as the GHZ--FR paradox shows, one needs a rule for when to expect Born rule predictions to be valid, as agents need to be able to phrase their Born rule predictions correctly.
When holding on to the experimental validity of the Born rule per {\possBorn}, as a principle for when such predictions are valid, one can use {\BornP} as well.
If an agent has more knowledge of the underlying theory or ontology their world is subject to, they can perhaps make more precise statements. 
For example, one can imagine a theory or ontology where AOE is satisfied (as in Bohmian mechanics), and where an agent can say their prediction is valid as long as they can gather all involved outcomes.
However, one can imagine a theory where additionally actually asking for the outcomes changes the actual outcomes through hidden variables. 
Therefore, without further knowledge, in theories rejecting {\BornCPersK}, agents can use {\BornP} for correct predictions.

\paragraph*{c. Reject {\ClassicalT}} \label{par:reject_classical_trust} Rejecting {\ClassicalT} would mean that classically communicating agents (communicating in whichever way they prefer) disagree on the truth value of a statement about what happened in a measurement.
In principle, every performed measurement whose outcome is not stored in some classical agent's memory can be part of a paradox à la GHZ--FR. Thus, disagreement among classically communicating agents about what happened in the measurement $M$ topples a basic assumption of science: we cannot trust other scientists' reports of what happened in an experiment.
To see this difficulty in more detail, we consider the following situation. Alice performs a measurement, and Bob makes a statement about Alice's outcome, claiming for example that Alice obtained $a=0$. Bob could have arrived at this statement by reasoning or by explicitly asking Alice. 
Holding on to {\PersK} and {\BornCPersK}, there is no need for Bob to specify how he obtained the statement, which is true \textit{for him}. 
Therefore, when Bob tells us that Alice obtained $a=0$, we simply cannot know whether his statement (or any statement derived from it) is true \textit{for us}. 
A potential way out is that Bob conveys more details about how he obtained his conclusion, and then we decide whether or not to believe him, as argued in Qbism \cite{debrota2020respecting}, for example.
In our everyday world, such disagreement between agents does not arise, because of the absence of superobservers.

\subsection{A possible resolution} \label{sec:possible_resolution}
In \Cref{sec:ways_to_resolve_trust} we proposed {\BornP} to resolve the GHZ--FR paradox, and by extension all extended Wigner's friend paradoxes, if one accepts the existence of superobservers. 
The principle says that observers should state their Born rule predictions \emph{conditioned} on learning about the involved outcomes.
In theories that reject {\AOE}, this principle can be implemented as {\ROE}, as explained in \Cref{para:reject_Pers_Knowledge}.
This resolution can be seen as an agent-based extension of Peres's dictum that `Unperformed experiments have no results' \cite{peres1978unperformed},
\begin{quote}
    Agent-based Peres's dictum:
    `Unperformed experiments have no results, and unknown results have no values.'
\end{quote}
We can see this as an extension of Peres's dictum, by considering learning about the outcome result (of a measurement performed by another agent) as an experiment too. 
Namely, you cannot \emph{a priori} assign outcome values to measurements or experiments, possibly performed by other agents, you can only do so upon actually learning the outcome. 
In a sense, this principle is a natural classical vs.\ quantum separation.
If a superobserver Wigner can model his friend's measurement unitarily, \ie as a quantum system, then Wigner cannot assign classical information to his friend's measurement outcome, as this information is still quantum for Wigner. 
Instead, Wigner should only assign a single value to his friend's outcome when he learns about that outcome.
Here is an example of {\BornP} in use for a simple Wigner's friend scenario:
\begin{quote}
    Wigner finds that $f=1$ is impossible (has probability zero) with $f$ denoting his friend's outcome and may conclude:
    ``Using the Born rule I (Wigner) know that my friend would never reply that she obtained $f=1$ \emph{if} I were to ask her for her outcome, (as long as my friend's memory record and mine are not wiped or updated).''
\end{quote}

Let us shortly explain how assuming {\BornP} instead of {\AOE} (or {\PersK}) and {\BornCAOE} resolves the GHZ--FR paradox. 
The FR paradox is resolved analogously, as presented in \Cref{appendix:appendix_resolution_FR}.

We assume that the GHZ--FR measurement protocol has been performed, considering a run with the (classically available) outcomes $u=0,v=0,w=0$, say.
Ursula, making her predictions, finds using {\BornP}:
\begin{quote}
    Ursula: `if I (Ursula) were to ask Bob and Charlie for their outcomes $b,c$, then I would find that $b \oplus c = 1$, as long as my memory and theirs are not wiped or updated.' 
\end{quote} 
Bob and Charlie argue similarly to conlude that $a \oplus c = 1$ and $a \oplus b = 1$ when they ask for $(a,c)$ and $(a,b)$, respectively. 
However, none of these superobservers have actually asked the other observers Alice, Bob and Charlie for their outcomes, so that no paradox arises.
For completeness, all precise statements are presented in \Cref{appendix:ROE_resolution}.

\section{Comparison with the FR paradox} \label{sec:comparison_FR}
We compare the GHZ--FR paradox to the FR paradox in \Cref{sec:comparison_FR_GHZ_FR}. 
Then, in \Cref{sec:previous_work}, we place our work within the broader literature on the subject: we briefly review previous responses to the FR paradox, as well as some work closely related to the GHZ--FR paradox, where similar measurement protocols have been proposed but with different analyses.

\subsection{Why the GHZ--FR paradox is stronger than the FR paradox} \label{sec:comparison_FR_GHZ_FR} 
To establish the GHZ--FR paradox, in stark contrast to the FR paradox \cite{frauchiger2018quantum,vilasini2022general}, (i) only \emph{classical} agents reason, and (ii) \emph{no post-selection} on outcomes is required.
The Born rule is only used to determine that certain (joint) measurement outcomes are \emph{impossible}.
It need not be used to assert that certain outcomes are \emph{possible}, that is, that they occur with (some) nonzero probability.
While the former are definitive statements about \emph{every} run of the experiment -- they state that a particular combination of outcomes is \emph{never} witnessed --, the latter inform about events that occur only in \emph{some} runs, which would need to be post-selected for the derived contradiction to apply.
This difference is at the root of distinctive feature (ii) above.

To compare the assumptions of the FR no-go theorem and those of the GHZ--FR agreement no-go theorem, the strongest of the two no-go theorems from \cref{sec:GHZ_FR_nogo}, we begin by listing the assumptions of both theorems (cf. \Cref{fig:assumptions_FR_GHZ_FR}).
For the FR theorem, we rely on the presentation in Ref.~\cite{vilasini2022general}, which
considers the following list of assumptions: (C) Consistency, (D) Distribution axiom of modal logic, (Q) Quantum Born rule statements, (U) Unitarity, and (S) Single-valuedness of outcomes.

Distinctive feature (i) -- that only classical agents reason -- is reflected in the weakening of assumptions (C), (D), and (Q) in the GHZ--FR paradox.
Distinctive feature (ii) -- that no post-selection is required -- is not reflected in the assumptions. 
It corresponds to the fact that the contradiction at the heart of the GHZ--FR paradox occurs in \emph{every} single round of the protocol,
whereas the FR paradox only obtains a contradiction in \emph{some} (postselected) rounds in which a specific value of the final (super)measurements is observed.

\paragraph*{The FR assumptions} We first summarise the assumptions of the FR no-go theorem in our own wording. All uses of the word `statement' are shorthand for `statement about outcome values observed by agents'.\footnote{In the FR paradox, one refers to outcome values observed by a single agent in these statements. In the GHZ--FR paradox, multiple agents are involved in such statements, for example Ursula reasons about $b,c$ observed by Bob and Charlie, respectively; Ursula treats these joint outcomes as an outcome of a joint measurement by Bob and Charlie. However, in the GHZ--FR no-go theorem we could also allow $A,B,C$ to share their outcome values with each other so that both $b,c$ are observed by Bob and Charlie for example, and have a superobserver later undo this sharing before other measurements are performed. We could even have all these measurements $A,B,C$ be performed by the same agent, with $U,V,W$ acting upon different parts of that agent's memory, so that when Ursula talks about outcomes $b,c$ this refers to a joint measurement performed by that single agent.}
\begin{itemize}
\item[(Q)] An agent $A$ can translate the use of the possibilistic quantum Born rule applied to a set of performed (not necessarily all by $A$) measurements, to make statements about what outcomes are obtained or not.
\item[(C)] 
If an agent $A$ knows that an agent $B$ knows that statement $s_1$ is true from reasoning with these same five rules, then agent $A$ knows that $s_1$ is true.
\item[(D)] If an agent $A$ knows a statement $s_1$ and also knows that $s_1$ implies statement $s_2$ then agent $A$ can conclude `I (A) know $s_2$'.\footnote{Assumption (D) refers to the Distribution axiom of Modal Logic and was first identified in \cite{nurgalieva2018inadequacy}, along with assumption (U).}
\item[(S)] Outcomes of performed measurements from an agent's viewpoint who uses (Q) are single-valued.
\item[(U)] An agent $A$ can model another agent $B$ measuring a system as a unitary process.
\end{itemize}
Let us give an example of (Q) in use. If agent $A$ applies the Born rule to a set of measurements that can be simulated in parallel and finds that a certain outcome $(a_1,\ldots,a_n)$ cannot occur, then $A$ can conclude that some $a_i$ is not obtained (in a measurement performed possibly by another agent).\footnote{We rephrased assumption (Q) here somewhat compared to FR's original phrasing, as unwanted interpretations of the latter may (wrongly) conclude collapse to be necessary for the FR paradox \cite{aaronson2018s,schmid2023review,tausk2018brief,bub2021understanding,mucino2020wigner}.}

\paragraph*{Rephrasing the GHZ--FR assumptions} We now rephrase the GHZ--FR assumptions in order to simplify the comparison with the assumptions of the FR paradox, \emph{emphasising} the differences.\footnote{In fact, this paraphrasing of our no-go theorem has slightly stronger assumptions than needed. We did so to tailor them more to the phrasing of the FR assumptions. To weaken them, we could allow the (Q) assumption to have statements about outcome values assigned by the agent making the prediction, subject to \ClassicalT.}
\begin{itemize}
    \item[(Q')] A \emph{classical}\footnote{We recall that by a \textit{classical} agent we mean an agent that need not be modelled quantumly by any other agent in the protocol.} agent $A$ can translate the use of the possibilistic quantum Born rule applied to a set of performed (not necessarily all by $A$) measurements, to make statements about what outcomes are obtained or not.
    \item[(C')] If a \textit{classical} agent $A$ knows that a \textit{classical} agent $B$ knows that statement $s_1$ is true from reasoning with the same five rules, \textit{and $B$ communicates his reasoning to $A$ through classical communication}, then agent $A$ knows that $s_1$ is true.
    \item[(D')] If a \emph{classical} agent $A$ knows a statement $s_1$ and also knows that $s_1$ implies statement $s_2$ then agent $A$ can conclude `I (A) know $s_2$'. 
\end{itemize}

\paragraph*{Comparing FR and GHZ--FR} Comparing the FR and GHZ--FR assumptions, (U) and (S) are the same, but the GHZ--FR assumptions (C'), (D'), (Q') are weaker versions of the FR assumptions (C), (D), (Q), respectively (cf. \Cref{fig:assumptions_FR_GHZ_FR}).
Namely, only \emph{classical agents} need to produce statements and reason, and accept statements classically communicated among them.\footnote{As described in \Cref{sec:stricter_FR}, a slight modification of the reasoning in the FR paradox can also produce a no-go theorem from these weaker assumptions. A similar modification of the reasoning in the FR paradox can be found in \cite{healey2018quantum}.} This is an important weakening, as for example the Sleeping Beauty problem of Ref.~\cite{jones2024thinking} shows how FR's assumption (C) might be problematic already in a classical setup. Their argument does not apply to its weakened version (C') in the GHZ--FR paradox.

\paragraph*{Explaining the rephrasing of the GHZ--FR assumptions}
We have argued that the GHZ--FR paradox requires assumptions (U), (S), (C'), (D'), and (Q'), modelled after those used in discussions of the FR paradox.
We now explain how they relate to the concepts used to describe the GHZ--FR agreement no-go theorem of \Cref{sec:GHZ_nogo_trust}; see also \Cref{fig:assumptions_FR_GHZ_FR}.
Concerning the truth no-go theorem, we recall that the assumptions {\ClassicalT} and {\PersK} of the agreement no-go theorem are replaced by the \textit{a priori} stronger {\AOEfull} in the truth no-go theorem. 
\begin{itemize}[]
\item[(U)] is the equivalent of the GHZ--FR assumption {\universality}. 
\item[(S)] is captured in {\PersK}, and {\AOEfull}, allowing an agent to assign a single-valued outcome to performed measurements. 
\item[(C')] is a (slightly stronger) paraphrasing of {\ClassicalT}. 

\item[(D')] allows for classical agents to reason and is assumed implicitly in the GHZ--FR agreement no-go theorem. Classical agents are just like us, using elementary rules of logic such as the \textit{modus ponens} as expressed by assumption (D').

\item[(Q')] is captured in the conjunction of {\AOE} or {\PersK} and {\BornCAOE}. 
The assumption (Q') assumes the validity of the Born rule. But in fact, (Q') needs more, as it allows an agent $A$ to make statements about outcomes obtained in a measurement performed by another agent $B$, regardless of whether $A$ tests this Born rule prediction or not, captured by {\PersK} and {\BornCPersK}, respectively. 
\end{itemize}

\begin{figure}[h]
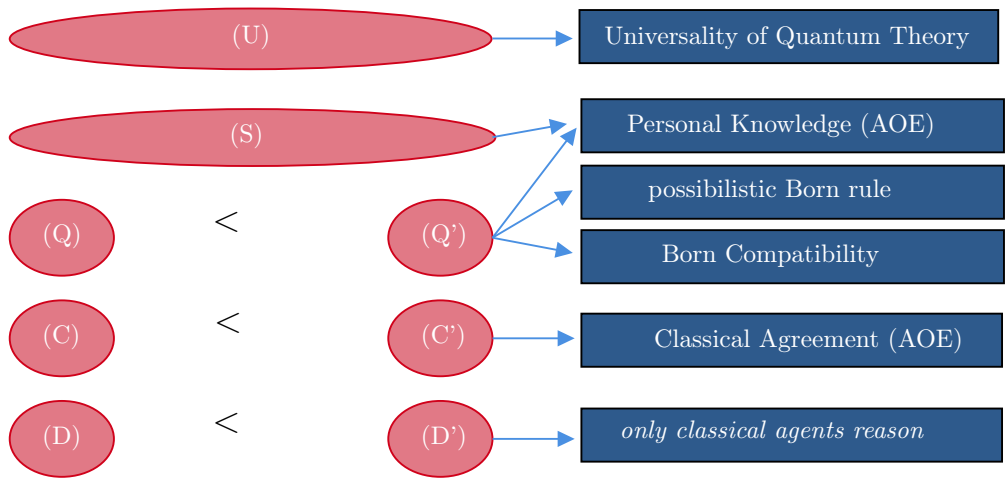

\includestandalone[width=0.9\textwidth]{tikzfigures/comparison_FR2}  \caption{Comparing the assumptions of the FR paradox to those of the GHZ--FR no-go theorems. An assumption of the FR no-go theorem being strictly stronger than a corresponding assumption of the GHZ--FR agreement no-go theorem is denoted by $>$. In red we compare the assumptions (U), (S), (Q), (C) and (D) of the FR no-go theorem to the rephrased GHZ--FR assumptions (U), (S), (C'), (D') and (Q'). How these rephrased assumptions correspond to the GHZ--FR assumptions of \Cref{sec:GHZ_FR_nogo}, which we have used throughout this work, is shown by the arrows between the assumptions in red circles and blue boxes. The assumptions of the agreement no-go theorem are listed, as the agreement no-go theorem is stronger than the truth one. In the truth no-go theorem, {\protect\PersK} and {\protect\ClassicalT} are replaced by the stronger {\protect\AOEfull}, as listed in brackets.} \label{fig:assumptions_FR_GHZ_FR} 
\end{figure}

\subsection{Related work on the FR paradox} \label{sec:previous_work}
In this section, we will compare the GHZ--FR paradox to previous work on the FR paradox and protocols related to the GHZ--FR paradox.

\paragraph*{Previous responses to the FR paradox}
The FR no-go theorem \cite{frauchiger2018quantum,vilasini2022general} forces us to refute one of its assumptions, and has been discussed extensively in the literature. 
We are not able to survey all relevant works; instead, we aim to provide a concise overview.
We classify previous responses to the FR paradox into three main categories.\footnote{Some works belong to more than one category, so we mention them more than once.} For more extended reviews we point the reader to Refs.~\cite{vilasini2022general,nurgalieva2020testing}.

\medskip

First, there is work that questions and re-analyses the assumptions used in the FR paradox \cite{vilasini2022general,nurgalieva2018inadequacy,sudbery2019hidden,nurgalieva2020testing,montanhano2023contextuality,boge2019quantum,vilasini2019multi,fraser2020fitch,schmid2023review}.
Some implicit assumptions have been identified in Refs.~\cite{nurgalieva2018inadequacy,sudbery2019hidden}. 
A link between paradoxes in FR-like scenarios and multi-agent knowledge structures, epistemic logic \cite{fagin2004reasoning,blackburn2001modal}, and contextuality has been investigated in Refs.~\cite{montanhano2023contextuality,boge2019quantum,nurgalieva2018inadequacy,vilasini2019multi,fraser2020fitch,corti2023logico,baltag2023logic}.

Other works claim that the FR paradox contains implicit assumptions that can be deemed invalid a priori, thereby questioning the relevance or novelty of the FR paradox and no-go theorem \cite{aaronson2018s,healey2018quantum,araujo2018flaw,fortin2019wigner,drezet2018wigner}. 
Aaronson \cite{aaronson2018s} argued in a blog post that the reasoning statements made by the friends are invalid at the end of the protocol is that the brains of the friends have been `Hadamarded' in the process. 
It is not clear to us whether this argument claims that (i) only reasoning or (ii) also mere experiencing of the measurement result by the unitarily-modelled quantum observers is invalid. 
Case (i) has been criticised by del Rio and Renner~\cite{del2024reply}, and anyway does not apply to the GHZ--FR paradox, as only `classical' agents reason.
Case (ii) could correspond to rejecting {\universality}, but may also suggest a relativity of observed events, depending on its exact interpretation.
Healey \cite{healey2018quantum} argues that an additional assumption of Intervention Sensitivity is required for reasoning about outcomes obtained by quantum agents,
and Drezet \cite{drezet2018wigner} asserts that in a Bohrian interpretation reasoning by quantum agents may be problematic.
Vilasini and Woods \cite{vilasini2022general} suggest that the arguments of Refs.~\cite{healey2018quantum,araujo2018flaw} also refer to the post-selection step, but this depends on the exact reading of these two references.
Criticisms of post-selection and reasoning by quantum agents do not apply to the GHZ--FR paradox as no post-selection or reasoning by quantum agents is needed.  
Ara\'ujo \cite{araujo2018flaw} seems to argue that both collapse and no-collapse modelling of a measurement is needed for the FR argument. However, we note that the state-update rule is not needed \cite{fortin2019wigner}, and only unitary modelling and the Born rule for predictions are used. Namely, each use the Born rule in the FR involves only measurements that can be performed jointly.\footnote{We thank David Schmid and Yìlè Y\={\i}ng for discussions on this matter.} 
Fortin and Lombardi \cite{fortin2019wigner} argue that the FR argument is either illegitimate or uninteresting and simply another proof of contextuality as agents make reasoning statements that correspond to different contexts when mapping the FR protocol to a Bell nonlocality scenario, but does not respond to how agents can then use quantum theory consistently in the presence of superobservers.

\medskip

Secondly, the FR paradox has been analysed from the viewpoint of specific interpretations of quantum theory \cite{nurgalieva2020testing,lazarovici2019quantum,drezet2018wigner,sudbery2017single,sudbery2019hidden,di2021stable,cavalcanti2023consistency,gambini2019single,debrota2020respecting,losada2019frauchiger}, and arguments have been {put forward for which assumption of the FR paradox to refute.
The FR paradox has been studied in Bohmian and Bell--Bohmian theories in Refs.~\cite{lazarovici2019quantum,drezet2018wigner,sudbery2017single}, in a relational quantum theory setting in Ref.~\cite{di2021stable}, from a QBist perspective in Ref.~\cite{debrota2020respecting}, and in the consistent-histories interpretation in Ref.~\cite{losada2019frauchiger}.
An overview of the FR paradox in different interpretations of quantum mechanics can be found in Ref.~\cite{nurgalieva2020testing}. 
\begin{itemize}
\item The (Bell--)Bohmian analyses of the FR \cite{lazarovici2019quantum,drezet2018wigner,sudbery2017single,sudbery2019hidden} argue that the 
agents apply the Born rule (Q) wrongly, as they do not take into account the full state, \ie the pilot wavefunction.
But still, one would like the agents to be able to make correct predictions, even if they cannot take an outsider's view. 
We have provided a possible way to phrase their predictions correctly in \Cref{sec:ways_to_resolve_trust} under `Reject Born Compatibility'.

\item The relational and QBist perspectives propose to refute FR's consistency assumption (C),
allowing for different agents to assign different truth values to statements \cite{nurgalieva2020testing,di2021stable,debrota2020respecting}. One may argue that also the consistent-histories approach to the FR \cite{losada2019frauchiger} refutes (C) \cite{nurgalieva2020testing}, as inferences are made between inconsistent sets of histories.
When refuting (C), one would still like to know when one agent can trust another agent's statement or reasoning. 
For example, in the GHZ--FR paradox, one can imagine that Ursula's, Valentina's, and Wigner's measurements are all performed by the same agent (sequentially in their proper time), who then makes statements similar to those of Ursula, Valentina, and Wigner: must this agent then not trust their own statement? 
In the QBist approach to FR by DeBrota, Fuchs, and Schack \cite{debrota2020respecting}, it is for example argued that when to trust another agent's statement can be decided on a case-by-case basis.
Our GHZ--FR no-go theorem shows that even classical communication cannot be trusted when allowing different truth values for different agents, and we have argued that perhaps it is easier to instead restrict the validity of statements using the Born rule for predictions, while holding on to consistency between agents. 
The fact that no quantum agents need to reason in the GHZ--FR paradox is crucial for this argument, and poses sharper restrictions on building consistent reasoning principles.\footnote{Whether the relative-facts approach of Ref.~\cite{di2021stable}, which refutes (C) and argues to not believe the reasoning of the quantum agents, resolves the paradoxes presented in this paper or not depends on its exact interpretation, as discussed in \Cref{appendix:stable_relative_GHZ_FR}.} 

\item Other works \cite{relano2018decoherence,relano2020decoherence,kastner2020unitary,zukowski2021physics,gambini2019single} argue for refuting the existence of superobservers because of decoherence or absolute collapse, rendering the FR protocol -- involving superobservers -- impossible, even in principle. 
However, decoherence may be part of an explanation of the quantum-to-classical transition, but it is not a full answer to the measurement problem \cite{schlosshauer2019quantum,schlosshauer2005decoherence,schlosshauer2007quantum}. Absolute collapse requires an explanation of how or why the collapse happens, as in spontaneous collapse theories \cite{ghirardi1985model,ghirardi1986unified} or gravity-induced collapse \cite{penrose1996gravity,diosi1987universal}. 
Experiments may shed light on whether collapse theories, which have different dynamics than ordinary quantum theory, are likely to be correct \cite{bassi2003dynamical,forstner2020nanomechanical,arnquist2022search,donadi2021novel,carlesso2022present}.
\end{itemize}

\medskip

Thirdly, there is some work towards responding to Renner's challenge of building a consistent set of reasoning principles for quantum theory
that evades contradictions like the FR paradox \cite{renes2021consistency,narasimhachar2020agents,polychronakos2022quantum,vilasini2022general}. 
Here we note that Vilasini and Woods~\cite{vilasini2022general} propose that one be explicit about which Heisenberg cuts are used in each statement concerning outcomes of performed measurements. This leaves open the question of the physical meaning of Heisenberg cuts, and why should statements be formulated as dependent on it? Is it merely a choice of what to model inside and outside the quantum theory? If so, then still desirably no contradictions about the outcome values as in the (GHZ--)FR paradox should arise when making different choices, if it is merely a choice. However, a resolution à la {\ROE} proposes a reason of why such mentioning would be necessary: it refers to the statement being conditional on physically obtaining (a copy of) the outcome record(s) one is talking about; a prediction by agent $A$ is meant for when $A$ actually obtains those outcome records (and the protocol be performed truthfully per agent $A$'s knowledge).}
Moreover, we also note that the GHZ--FR no-go theorem invalidates the set of reasoning principles from Ref.~\cite{polychronakos2022quantum} as we show in \Cref{appendix:GHZ_FR_invalidates_reasoning_principles_polychronakos}.

\medskip

Our work belongs in the first category, as it provides a stronger no-go theorem than the FR. It also touches on the third category by suggesting ways to resolve paradoxes involving superobservers in quantum theory.

\paragraph*{Related work on the GHZ--FR paradox}
A protocol similar to the one underlying the GHZ--FR paradox has been proposed by {\.Z}ukowski and Markiewicz \cite{zukowski2021physics} and by Leegwater \cite{leegwater2022greenberger}. While the measurements performed in the protocols (without reasoning steps) in Refs.~\cite{zukowski2021physics,leegwater2022greenberger} are the same,
the exact way in which a contradiction is obtained differs, as do the ensuing analyses and no-go theorems. 
In Ref.~\cite{zukowski2021physics}, an explicit motivation for how the different measurement contexts that lead to a contradiction are chosen and considered together is not provided. 
In Ref.~\cite{leegwater2022greenberger} this motivation comes from simultaneity in special relativity. 
However, in cases where the motivation from special relativity of Ref.~\cite{leegwater2022greenberger} cannot be applied, the GHZ--FR paradox still yields a contradiction, as discussed in \Cref{appendix:related_work}.

Going beyond Ref.~\cite{leegwater2022greenberger}, we investigate the relationship of the GHZ--FR protocol to the FR paradox more thoroughly, and expose the hierarchy of their underlying nonlocality models.
Unlike Refs.~\cite{zukowski2021physics,leegwater2022greenberger} our agreement no-go theorem requires no absolute outcomes, due to the reasoning part of our protocol being different. 
This important weakening allowed us to propose the {\ROE} resolution as replacing {\AOEfull} in \Cref{sec:GHZ_nogo_trust,sec:possible_resolution}.
Unlike these works \cite{zukowski2021physics,leegwater2022greenberger}, we move away from having a bird's-eye view on the protocol as an outsider, and instead consider the viewpoints from observers inside the protocol who try to make predictions and reason, as was the goal in the original FR paradox. To relate to the notion of \textit{Born Inaccessible correlations}\footnote{Born Inaccessible Correlations is defined in \cite{schmid2023review} as: ``It requires that when two measurements are made in parallel, the two outcomes always arise with a joint frequency given by the Born rule, even if no single observer could access both outcomes, even in principle.''} of Ref.~\cite{schmid2023review}, we acknowledge that there will always be correlations inaccessible to any agent. However, we assign the label `inaccessible' as we can take the outsider's bird's-eye view on the protocol; it might be impossible for agents inside to have this knowledge in all situations. Were we to make a no-go theorem using the notion of Born Inaccessible correlations, we would find a no-go theorem similar to the GHZ--FR truth no-go theorem, replacing Born Compatibility with Born Inaccessible correlations.
We discuss the works~\cite{leegwater2022greenberger,zukowski2021physics} further in \Cref{appendix:related_work}. 

A similar relational resolution as {\ROE} in the context of Relational Quantum Mechanics \cite{rovelli1996relational,rovelli2018space}
was discussed by Cavalcanti, Di Bagio and Rovelli \cite{cavalcanti2023consistency} for a scenario involving the same measurement protocol as the GHZ--FR paradox, based on Refs.~\cite{zukowski2021physics,lawrence2023relative}. 
Our resolution per {\ROE} accords with their proposed resolution, except that our proposal does not invoke the existence of cross-perspective links \cite{adlam2022information}.
Furthermore, as noted by Cavalcanti and Wiseman in Ref.~\cite{cavalcanti2021implications}, to resolve the Local Friendliness no-go theorem, by giving up on {\AOE} for a notion of relative events one can maintain compatibility with Leibniz's principle.

\paragraph*{Relation to the Local Friendliness no-go theorem} 
The Local Friendliness no-go theorem \cite{bong2020strong,haddara2022possibilistic,cavalcanti2021implications,schmid2023review} is based on a protocol similar to the FR's but is phrased in a device-independent fashion. 
It uses the Born rule only for outcomes that are obtained by us (\ie that are available after the protocol),
but in addition to {\AOE} it also requires a notion of free choices satisfying `Local Agency', a locality constraint not invoking hidden variables as in Bell's local causality.
If in the GHZ--FR paradox of \Cref{sec:GHZ_final}
we give the superobservers a choice between performing their measurement or asking their respective observers for their outcomes, we obtain a Local Friendliness scenario \cite{bong2020strong,haddara2022possibilistic}. 
In upcoming work we\footnote{Joint work of LW with R. Wagner, Y. Y{\=\i}ng, and D. Schmid.} clarify how also contextuality scenarios may be used to produce a Local Friendliness no-go theorem.

\section{Conclusion} \label{sec:conclusion}
We conclude by summing up the main results and outline some future research directions.

\subsection{Summary}
Motivated by the question of what the FR paradox \cite{frauchiger2018quantum} implies for quantum foundations, we presented a similar but strictly stronger paradox, the GHZ--FR paradox, based on the strongly contextual GHZ--Mermin model \cite{greenberger1989going,mermin1990simple}.
Outlining carefully the required assumptions, we obtained two no-go theorems of increasing strength, which we named the GHZ--FR truth and agreement no-go theorems.
The easiest to grasp is the GHZ--FR truth no-go theorem, whose assumptions are summarised in \Cref{table:GHZ_FR_truth_assumptions}. 
The agreement no-go theorem weakens {\AOEfull} replacing it by {\PersK} and {\ClassicalT}; see \Cref{table:AOE_trust_truth}: instead of \emph{absolute} outcome assignments to performed measurements, it posits outcome assignments that may a priori differ per agent, subject to the condition that classically communicating agents must agree on their outcome assignments.

\begin{table}[h]
\resizebox{\textwidth}{!}{%
\begin{tabular}{|
>{\columncolor[HTML]{F0EEEC}}l 
>{\columncolor[HTML]{F0EEEC}}l |}
\hline
\multicolumn{2}{|l|}{\cellcolor[HTML]{656565}{\color[HTML]{FFFFFF} }}                                                       \\
\multicolumn{2}{|l|}{\multirow{-2}{*}{\cellcolor[HTML]{656565}{\color[HTML]{FFFFFF} \textbf{GHZ--FR truth no-go theorem}}}} \\ \hline
                                           &                                                                                                 \\

Absoluteness of Observed Events &
  \begin{tabular}[c]{@{}l@{}}Every performed measurement has an absolute, single-valued outcome. \end{tabular}\\ (AOE) &
          \\
Born Compatibility &
  \begin{tabular}[c]{@{}l@{}} Outcomes assigned (per \text{AOE}) to a set of measurements that an agent $A$ \\ believes she can gather jointly in a memory  must be compatible with $A$'s use \\ of the possibilistic Born rule.
 \end{tabular} \\  &
                           \\
Universality of Quantum Theory             & Superobservers exist (in principle) and quantum theory is correct.             \\
                                                       
  \hline
\end{tabular}%
}
\caption{Assumptions of the GHZ--FR truth no-go theorem; more precisely stated assumptions can be found in \Cref{sec:GHZ_nogo}.}
 \label{table:GHZ_FR_truth_assumptions}

\end{table}

\begin{table}[h]
\resizebox{\textwidth}{!}{%
\begin{tabular}{|
>{\columncolor[HTML]{F0EEEC}}l 
>{\columncolor[HTML]{F0EEEC}}l 
>{\columncolor[HTML]{F0EEEC}}l |}
\hline
\cellcolor[HTML]{656565}{\color[HTML]{FFFFFF} }                                                       & \multicolumn{2}{l|}{\cellcolor[HTML]{656565}{\color[HTML]{FFFFFF} }}                                                                                                                                             \\
\multirow{-2}{*}{\cellcolor[HTML]{656565}{\color[HTML]{FFFFFF} \textbf{GHZ--FR truth no-go theorem}}} & \multicolumn{2}{l|}{\multirow{-2}{*}{\cellcolor[HTML]{656565}{\color[HTML]{FFFFFF} \textbf{GHZ--FR agreement no-go theorem}}}}                                                                                       \\ \hline
\multicolumn{1}{|l|}{\cellcolor[HTML]{F0EEEC}}                                                        & Personal Knowledge & \begin{tabular}[c]{@{}l@{}}Every performed measurement can be assigned a \\ single-valued outcome by an agent who believes \\ they are able to gather that outcome.\end{tabular} \\
\multicolumn{1}{|l|}{\multirow{-2}{*}{\cellcolor[HTML]{F0EEEC}Absoluteness of Observed Events}}       &                    &                                                                                                                                                                                             \\
\multicolumn{1}{|l|}{\cellcolor[HTML]{F0EEEC}}                                                        & Classical Agreement    & \begin{tabular}[c]{@{}l@{}}The outcome assignments of classical agents 
 who \\ communicate classically must agree on their overlap. \end{tabular}                                                      \\ \hline
\end{tabular} } 
\caption{The GHZ--FR agreement no-go theorem is based on the same assumptions as the GHZ--FR truth no-go theorem, but with Personal Knowledge and Classical Agreement replacing Absoluteness of Observed Events.}
\label{table:AOE_trust_truth}
\end{table}

Pursuing the route of rejecting {\AOE} or {\BornCAOE}, as discussed in \Cref{sec:GHZ_nogo_trust}, the agreement no-go theorem suggests a principle that we call {\BornP}, defined in \Cref{sec:possible_resolution}, meaning that an agent $A$ can use the Born rule only to formulate predictions that condition on $A$ actually learning about all involved outcomes.
This principle can be seen as a natural agent-based extension of Peres's dictum \cite{peres1978unperformed} `Unperformed experiments have no results':
\begin{quote}
   `Unperformed experiments have no results, and unknown results have no values'.
\end{quote}

One may view this as a natural quantum-vs-classical statement.
Therefore, as a possible resolution to extended Wigner's friend paradoxes we propose {\BornP} as a principle governing agents' use of quantum theory in a world where {\universality} holds.
In a relational framework, rejecting {\AOE}, {\BornP} can be implemented by having single-valued outcomes only defined relative to the experimenter and other agents who learn about the outcome, which we named {\ROE} in \Cref{sec:GHZ_nogo_trust}. 
In a hidden-variable framework, rejecting {\BornCAOE}, one may instead have hidden variables determining absolute, single-valued outcomes of all performed measurements but agents must follow {\BornP} for correct predictions of the Born rule.
Of course, one may also question the existence of superobservers, or simply allow for classically communicating agents to have inconsistent truth assignments. The latter, however, seems impractical; see \Cref{sec:ways_to_resolve_trust,sec:previous_work}.

\subsection{Outlook}
To conclude, our work opens up various directions for future work.
In upcoming work, we will discuss the implications of the the GHZ--FR paradox for the measurement problem, for different interpretations of quantum theory, and how it suggests a resolution to Renner's challenge of providing a consistent set of reasoning rules in quantum settings.
Another intriguing question is what implications the GHZ--FR paradox and other extended Wigner's friend paradoxes have for the discussion on the epistemic or ontic nature of quantum states \cite{harrigan2010einstein,pusey2012reality,hardy2013quantum,leifer2014quantum} or ontologies of quantum mechanics.
A deeper investigation of the relationship between {\ROE} or {\BornP} and other proposed resolutions of extended Wigner's friend paradoxes,
including bubbles and quantum influences~\cite{ormrod2024quantum,cavalcanti2021view}, the relational approach of `fact-nets'~\cite{martin2023fact}, and quantum reference frames~\cite{de2020quantum} is left for future work.
Finally, in recent work extended Wigner's friend paradoxes have been presented based on contextuality rather than nonlocality \cite{walleghem2023extended,szangolies2020quantum,walleghem2024connecting}. 
An interesting avenue to explore is to combine
Wigner's friend with other nonclassical quantum features such as steering~\cite{schrodinger1935discussion,wiseman2007steering}, incompatibility~\cite{guhne2023colloquium}, generalised contextuality~\cite{spekkens2005contextuality,schmid2020structure}, coherence \cite{streltsov2017colloquium,galvao2020quantum,Wagner_2024}, \ldots

\section*{Acknowledgements}
L.W.~expresses deep gratitude to E. G. Cavalcanti for many stimulating and insightful discussions on the Local Friendliness no-go theorem, the FR paradox, and the concept of `relative events'. L.W.~also thanks INL in Braga, Portugal for the kind hospitality, and its QLOC group at INL for many useful comments and interesting discussions on contextuality. L.W. also thanks Rafael Wagner, Nick Ormrod, David Schmid, and Y{\`\i}l{\`e} Y{\=\i}ng for many interesting discussions. 
R.S.B.~thanks Shane Mansfield for discussions several years ago which planted the seed of some of the present results.
Figures are produced with tikz and matcha, with the help of Rafael Wagner.

L.W.~acknowledges support from the United Kingdom Engineering and Physical Sciences Research Council (EPSRC) through the DTP Studentship EP/W524657/1. 
R.S.B.~acknowledges financial support from FCT -- Funda\c{c}{\~a}o para a Ci{\^e}ncia e a Tecnologia (Portugal) through CEECINST/00062/2018.

\clearpage

\appendix

\section{Clarification of concepts}
\label{appendix:clarific}

In this section we clarify the concept of `learning about an outcome'.

\paragraph*{Definition of `learning about an outcome'}
    By \emph{learning about an outcome of a measurement}, or \emph{asking an experimenter for the outcome of their performed measurement}, we mean the following. Consider an agent $A$ performing a measurement on a qubit system $S$ in the basis $\ket{0}, \ket{1}$. Let $E$ be the environment, namely the whole closed system excluding $A$ and $S$. We can model this measurement as a unitary: \begin{equation}
        \big( \alpha \ket{0} + \beta \ket{1} \big)_S \otimes |\text{ready} \rangle_A \otimes | \text{ready} \rangle_E \rightarrow \alpha \ket{0}_S \otimes \ket{0}_A \otimes \ket{0}_E + \beta \ket{1}_S \otimes \ket{1}_A \otimes \ket{1}_E
    \end{equation} where we assume that the agent and environment are initialised in the $|\text{ready}\rangle_A, |\text{ready} \rangle_E$. Here $\ket{0}_E, \ket{1}_E$ denote the traces of the measurement outcome $0$ or $1$ in the environment. Learning about the outcome of this measurement then includes: \begin{itemize}
        \item Asking the experimenter $A$ for the outcome, corresponding to measuring $A$ in the basis $\ket{0}_A,\ket{1}_A$;
        \item Measuring the system $S$ in the measurement basis $\ket{0}_S,\ket{1}_S$;
        \item Obtaining the measurement outcome through traces in the environment, corresponding to measuring the environment $E$ in the basis $\ket{0}_E,\ket{1}_E$;
        \item classical communication about outcomes through another agent who learns about the outcome in one of the three ways stated above.
    \end{itemize} 

\section{Proof of the GHZ--FR agreement no-go theorem} \label{appendix:proof_agreement_nogo}
\begin{theorem}[GHZ--FR agreement no-go theorem]
   The GHZ--FR paradox shows that the assumptions  {\universality}, {\possBorn}, {\PersK}, {\ClassicalT}, and {\BornCPersK}  are incompatible.
\end{theorem}
\begin{proof} 
    The assumption {\universality} ensures that the GHZ--FR measurement protocol can be performed. 
    We consider the reasoning of the superobservers after having obtained their outcomes.
    Ursula makes a statement for $U,B,C$, believing she might test her prediction still.
    Using {\PersK}, we denote Ursula's outcome assignment to the performed measurements $U,B,C$ by $f_U:\{U,B,C\}\rightarrow O$.
    Similarly, we denote Valentina's and Wigner's outcome assignments by $f_V:\{V,A,C\} \rightarrow O$ and $f_W:\{W,A,B\} \rightarrow O$, respectively.
    First, we consider Ursula's reasoning statement. For her, assuming {\BornCPersK} her personal outcome assignments must be compatible with her use of the {\possBorn}:
    \begin{equation} \label{eq:ubc_proof_trust}
        f_U(B) \oplus f_U(C) = 1 \oplus f_U(U).
    \end{equation}
    Similarly, we obtain for Valentina and Wigner that 
    \begin{align}
    \label{eq:avc_proof_trust}
        f_V(A) \oplus f_V(C) &= 1 \oplus f_V(V), \\
        \label{eq:abw_proof_trust}
        f_W(A) \oplus f_W(B) &= 1 \oplus f_W(W).
    \end{align}
   As the outcomes of Ursula, Valentina, and Wigner are all classically available after the protocol, and can be obtained by us for example too, we have an outcome assignment $f_Z: \{U,V,W\} \rightarrow O$, satisfying
    \begin{equation} \label{eq:uvw_proof_trust_Z}
        f_Z(U) \oplus f_Z(V) \oplus f_Z(W) = 0,
    \end{equation}
    by application of {\possBorn}. As we can communicate with Ursula about her outcome, we have to assign the same value to that outcome, \ie $f_Z(U)=f_U(U)$, and similarly we will have $f_Z(V)=f_V(V), f_Z(W)=f_W(W)$, so we denote them by $u,v,w$. This is a verifiable use of quantum theory, and can also be seen as a verifiable application of \PersK, \BornCPersK and \ClassicalT. 
    Finally, as the agents Ursula, Valentina and Wigner classically communicate, their outcome assignments must agree on overlaps of their assignments by \ClassicalT, \ie
    \begin{equation}
    f_V(A) = f_W(A),  \qquad
     f_U(B) = f_W(B), \qquad 
    f_U(C) = f_V(C),
    \end{equation}
    so we can denote them simply by $a,b,c$, respectively.\footnote{Note that for example $f_U(C)=f_V(C)$, \ie Ursula and Valentina assigning the same outcome value to Charlie's measurement, could be empirically verified by having them ask Charlie for his outcome if Wigner would perform his measurement later. A similar observation holds for $f_U(B) = f_W(B)$ and $f_U(C)=f_V(C)$.} 
    Combining with the conditions derived above yields
     \begin{equation} \label{eq:abc_predictions_GHZ_FR}
        \begin{split}
            b \oplus c = 1 \oplus u, \quad
            a \oplus c = 1 \oplus v, \quad
            a \oplus b = 1 \oplus w, \quad
            u \oplus v \oplus w = 0.
        \end{split}
\end{equation} 
A contradiction is obtained as these four equations are inconsistent,
\ie they admit no solution over outcomes in $\{0,1\}$.
\end{proof}

\section{The GHZ--FR paradox in epistemic logic} \label{appendix:epistemic_logic}
In \Cref{remark:epistemic} we have shortly commented on the GHZ--FR paradox in terms of epistemic logic. We provide a more detailed discussion here. 
    In Refs.~\cite{vilasini2019multi,nurgalieva2018inadequacy,montanhano2023contextuality}, the FR paradox has been described in terms of epistemic modal logic \cite{fagin2004reasoning,blackburn2001modal}.
    We briefly outline a similarly flavoured treatment of the GHZ--FR paradox.

    Epistemic logic is suitable to reason about knowledge or beliefs (of different agents).
    Syntactically, besides the usual logical connectives, formulas can be built using a unary modal operator $K_A$ for each agent $A$,
    where a formula $K_A \phi$ is interpreted as meaning `agent $A$ believes that $\phi$ is True'. 
    Different sets of axioms are typically considered in epistemic logic systems, varying with the intended interpretation.
    
    In Ref.~\cite{nurgalieva2018inadequacy}, the \textit{Distribution Axiom} of modal logic (a.k.a.\ axiom \textbf{K}) is listed as assumption (D) of the FR paradox
    (see \Cref{sec:comparison_FR}), and
    used to justify the use of \textit{modus ponens} by quantum agents in the FR paradox. 
    In the GHZ--FR paradox, however, as in the version of the FR paradox from \Cref{sec:stricter_FR}, only classical agents need to reason, 
    dispensing with the reasoning by quantum agents considered in the earlier treatments of the FR paradox \cite{nurgalieva2018inadequacy,vilasini2019multi,montanhano2023contextuality,frauchiger2018quantum}.
    The fact that we need only consider reasoning by classical agents in the protocol is the reason why we do not state such reasoning as a separate assumption in our no-go theorems.
    In modal logical terms, this is reflected in the fact that we do not need the epistemic modalities $K_A$ for quantum agents:
    the whole argument can be phrased in a logical system with modalities only for each \textit{classical} agent in the protocol. 
    It could perhaps be argued that this somewhat weakens the appeal of a formalisation of the paradoxes using modal logic.

    Still, the core assumptions of our GHZ--FR no-go theorems can be phrased in these terms.
    The GHZ--FR truth no-go theorem requires the existence of an absolute truth,
    per {\AOEfull}.
    This is captured by the so-called \textit{Truth} or \textit{Knowledge Axiom} (a.k.a.\ axiom \textbf{T}),
    which typically distinguishes knowledge from mere belief.
    It states that if an agent \textit{knows} something, then it is true:
    \begin{equation}
        K_A \phi \rightarrow \phi.
    \end{equation}
    In the GHZ--FR agreement no-go theorem, such absolute truth is replaced by {\PersK} and {\ClassicalT}. 
    {\PersK} is captured simply by the fact that one considers modal operators $K_A$ for each classical agent $A$, together with the restriction that to form $K_A\phi$ the formula $\phi$ can only use propositions about outcomes of measurements accessible to $A$.\footnote{The semantic interpretation is precisely that there exists a valuation $\alpha_A$ assigning values in $0,1$ to the propositional variables in the formula $\phi$.}
    {\ClassicalT} corresponds to agreement between classically communicating agents: if $A$ and $B$ can communicate classically, then
\begin{equation} K_A \phi \rightarrow K_B\phi.\end{equation}
Note that only classical communication between classical agents must be trustworthy, \ie only classical agents must trust each others' classically communicated statements. This assumption is strictly weaker than the stronger epistemic trust structures that were identified for the FR paradox in Refs.~\cite{nurgalieva2018inadequacy,vilasini2019multi,montanhano2023contextuality}, where classical agents must also believe the thoughts made by quantum agents.
Concretely, in the GHZ--FR paradox, the three superobservers Ursula, Valentina, and Wigner can classically communicate and thus must `trust' each other. Hence, $K_U\phi$ if and only if  $K_V\phi$ if and only if  $K_W\phi$.

    The GHZ--FR paradox starts from the following statements, made by each of the superobservers, per  {\possBorn} and {\BornCPersK}:
    \begin{equation} \label{eq:epistemic_GHZ_UVW}
    \begin{split}
        K_U (b\oplus c \oplus u = 1), \\
        K_V (a \oplus c \oplus v = 1), \\
        K_W (a \oplus b \oplus w = 1). 
    \end{split}
    \end{equation}
     Furthermore, any of the classical agents can use the Born rule to make statements about the joint outcomes of the three: in particular, one has
     \begin{equation} K_S (u \oplus v \oplus w = 0)\end{equation}
     for any $S \in \{U, V, W\}$.\footnote{In fact, the assumptions could be stated still somewhat more weakly, as classical agents only need to agree on the overlap of their outcome assignments per \ClassicalT, but for simplicity we stick to the above.}

     Using the Knowledge Axiom ({\AOE}) to remove the modalities, one reaches a contradiction, corresponding to the truth no-go theorem.
     Alternatively, one may use {\ClassicalT} to transfer all the statements under, say the $K_U$ modality, reaching a contradiction for Ursula.
     Finally, we further require the assumption that such a contradiction cannot occur in a modality, \ie that a classical agent cannot believe a contradiction.
     This is expressed by the axiom $K_U \bot \rightarrow \bot$, which is a weaker form of the Knowledge Axiom.

\section{Resolutions to FR-like paradoxes}
\subsection{Resolution of the GHZ--FR paradox} \label{appendix:ROE_resolution}
In the the GHZ--FR paradox, using {\BornP} the three statements by Ursula, Valentina and Wigner are now: 
\begin{itemize}
    \item `If I (Ursula) ask for the outcome of Bob and Charlie, I will find that $b \oplus c = 1$ (as long as none of our memories about our outcomes are altered).'\footnote{The last part `as long as none of our memories about our outcomes are altered' is necessary because of the following. Imagine an agent Zeno were a superobserver for Ursula, Valentina and Wigner, and could thus undo their measurements to go and ask for $a,b,c$, which cannot satisfy the equations for $a,b,c$ found by Ursula, Valentina and Wigner. The precise statements however do not claim to make a prediction in that case, as Ursula's, Valentina's and Wigner's memories would be wiped by Zeno's undoing of their measurements.
}
    \item `If I (Valentina) ask for the outcome of Alice and Charlie, I will find that $a \oplus c = 1$ (as long as none of our memories about our outcomes are altered).'
    \item `If I (Wigner) ask for the outcome of Alice and Bob, I will find that $a \oplus b = 1$ (as long as none of our memories about our outcomes are altered).'
\end{itemize}
The conjunction of these statements is now not paradoxical anymore, as none of Zeno, Ursula, Valentina, Wigner or any other agent asked Alice, Bob and Charlie for the outcomes $a,b,c$. Furthermore, Zeno, obtaining these statements, cannot go and ask for Alice's, Bob's and Charlie's outcomes $a,b,c$. 

 \subsection{Resolution of the FR paradox using the Relativity of Observed Events} \label{appendix:appendix_resolution_FR}
Let us assume {\BornP} as stated in \Cref{sec:GHZ_nogo_trust}, and show how this resolves the FR paradox of \Cref{sec:FR_overview}. Ursula makes a chain of reasoning, using her outcome to find Bob's outcome, who reasons about Alice's, who reasons about Wigner's. However, using {\BornP}, the statements she would make for these agents are as follows:
    \begin{itemize}
        \item Ursula: `If I (Ursula) ask Bob for his outcome, he would always reply that $b=1$ (as long as none of our memories about our outcomes are altered).'
        \item Ursula: `If Bob obtains $b=1$, and he asks Alice for her outcome, Alice replies that $a=1$ (as long as none of our memories about our outcomes are altered).'
        \item Ursula: `If Alice obtains $a=1$, and she asks Wigner for his outcome, then he replies that $w=\textfail$ (as long as none of our memories about our outcomes are altered).'
    \end{itemize}
These statements are now not contradictory as, for example, as no agent can ask for Alice's or Bob's outcome without wiping Ursula's or Wigner's memories. The strengthened version of the FR paradox of \Cref{sec:stricter_FR} is resolved similarly.

\section{Relation to some proposed resolutions in the literature} \label{appendix:relative_facts_and_polychronakos}
\subsection{Does the resolution `stable facts, relative facts' of Di Biagio \& Rovelli resolve FR-like paradoxes?}
\label{appendix:stable_relative_GHZ_FR}
In \cite{di2021stable} a resolution to the FR paradox proposed in terms of stable versus relative facts, applied to the GHZ--FR and hybrid Bell--FR paradox. As we argue below, depending on the exact interpretation of this stable-versus-relative-facts proposal, the GHZ--FR paradox may or may not be resolved. The authors of \cite{di2021stable} refute the Consistency assumption\footnote{Note that we argued in \Cref{sec:comparison_FR_GHZ_FR} that rejecting the possibility for an agent $A$ to assign a single value to the outcome of a measurement performed by another agent $B$, captured in {\AOEfull} and {\PersK} in the GHZ--FR truth and agreement no-go theorems, which suggests assuming {\ROE} instead, corresponds to rejecting (Q) in the FR paradox.} of the original FR paradox \cite{frauchiger2018quantum}, as they state (direct quote): 
\begin{quote}
    ``In terms of relative facts, assumption \textsc{Consistency} implies: “If $W$, applying quantum theory, can be certain that $L_S = a$ relative to $F$, then $W$ can reason as if $L_S = a$ also relative to $W$.” Now, as we have shown, this holds only if every fact relative to $F$ is stable for $W$, which is not a given and depends on the physics. Therefore Assumption Consistency only holds if $S$ or $F$ decohere with respect to $W$.'' [Here $F$ measures the variable $L_S$ of system $S$, with $W$ being a superobserver for $F$. ]
\end{quote} Recall that the Consistency assumption of Frauchiger and Renner can be stated as: \begin{quote}
    \begin{quote}
    `If an agent $A$ knows that an agent $B$ knows that statement $s_1$ is true from reasoning with a theory that $A$ accepts, then agent $A$ can take $s_1$ to be true.'
\end{quote} 
We use a weaker consistency condition in the agreement no-go theorems, namely involving only classical agents who need to agree with each other when they communicate their statements classically.
\end{quote}

\subsubsection*{Stable-versus-relative facts and the GHZ--FR paradox}
Applying the stable-versus-relative-facts resolution method to the GHZ--FR paradox, for example for $U,V,W$ obtaining $u=0,v=0,w=0$, which has a non-zero probability to occur, we obtain the following: \begin{itemize}
    \item From $u=0$ Ursula concludes that $b \oplus c =1$ relative to $B,C$.
    \item From $v=0$ Valentina concludes that $a \oplus c = 1$ relative to $A,C$.
    \item From $w=0$ Wigner concludes that $a \oplus b = 1$ relative to $A,B$. 
\end{itemize} As Ursula, Valentina and Wigner are classical agents and thus have decohered with respect to each other, we can combine these statements. We obtain \begin{equation}
    b \oplus c = 1, \quad a \oplus c = 1, \quad a \oplus b = 1,
\end{equation} relative to $A,B,C$. But these equations do not have a solution as $a,b,c \in \{0,1\}$. Whether or not this is a true contradiction now is debatable and depends on the exact meaning of the stable-vs-relative-facts proposal of \cite{di2021stable}, \ie on the exact meaning of the statement that these equations or outcomes are `relative to $A,B,C$'. If we can apply the usual classical logic to these statements, we run into a contradiction for the outcomes $a,b,c$ relative to $A,B,C$. On the other hand, if one argues that these outcomes $a,b,c$ are relative to $A,B,C$ and thus one cannot combine these equations in a single logical framework, then one does not run into a contradiction. In the latter case, the statements are part of different Boolean algebras and thus cannot be considered together.
In terms of the {\ROE}, one cannot combine these statements as the application of the Born rule translates into statements about outcomes if an agent would go and ask for these outcomes; as agents cannot all ask for these outcomes without violating the protocol the paradox is resolved. The different physical scenarios where agents ask for different outcomes correspond then to the restriction of statements to different Boolean algebras. 

\subsection{The GHZ--FR paradox invalidates a set of proposed reasoning principles}
\label{appendix:GHZ_FR_invalidates_reasoning_principles_polychronakos}
In \cite{polychronakos2022quantum} a set of reasoning principles was proposed as a resolution for the FR paradox. 
In this section we shortly explain how the GHZ--FR paradox invalidates this set of reasoning principles. The author in \cite{polychronakos2022quantum} states: \begin{quote}
    `I argue that the usual rules of quantum mechanics on measurement outcomes have to be complemented with (or rather, understood to imply) the condition that observers who are themselves going to be subject to measurements in a linear combination of macroscopic states cannot make reliable predictions on the results of experiments performed after such measurements.'
\end{quote} In the GHZ--FR paradox only classical agents need to reason, which is allowed by the above principle. Hence the above proposed principle of \cite{polychronakos2022quantum} does not resolve the GHZ--FR paradox. 

\section{Related work on the GHZ--FR paradox} \label{appendix:related_work}
A protocol similar to the GHZ--FR paradox has been proposed by {\.Z}ukowski and Markiewicz in \cite{zukowski2021physics} and by Leegwater in \cite{leegwater2022greenberger}, as explained briefly in \Cref{sec:previous_work}.

\medskip

The assumptions\footnote{The assumptions are not precisely stated in \cite{zukowski2021physics}, but this is our proposal for the required assumptions in \cite{zukowski2021physics}.} in \cite{zukowski2021physics} consist of {\AOEfull}, {\universality} and the application of the Born rule to different Heisenberg cuts in the scenario. It would thus correspond to the GHZ--FR truth no-go theorem, but the agreement no-go theorem is stronger. The assumptions in \cite{leegwater2022greenberger} are: \begin{itemize}
    \item Single Outcomes,\footnote{The assumption Single Outcomes can be seen as the corresponding to {\AOEfull} in the GHZ--FR paradox.} meaning that every performed measurement can be assigned a single-valued outcome, independent of the perspective or frame in which the outcome is described; and
    \item Relativistic Quantum Mechanics, meaning that any inertial frame can be used to describe the evolution of the wavefunction and that the Born rule is valid in any such inertial frame.
\end{itemize} The existence of superobservers is assumed implicitly in \cite{leegwater2022greenberger}. 

The authors of \cite{zukowski2021physics} argue for rejecting the existence of superobservers, but seem to leave some room still for an investigation of having relative single-valued outcomes instead of absolute ones.

In \cite{leegwater2022greenberger} the link with the FR paradox is not investigated, but only mentioned in the following quote: \begin{quote}
    ``It would take too far to examine the exact differences between the two results. However, we can say that in our opinion a crucial difference lies in the fact that F $\&$ R do not seem to assume unitary quantum mechanics in the way we do.''
\end{quote}
Thus there is a close connection between the two scenarios as we have shown here.

\medskip

The GHZ--FR paradox also allows for more varied lightcone structures than in \cite{leegwater2022greenberger}. For example, in the GHZ--FR paradox we can also let Bob perform his measurement in the future lightcone of Charlie, as shown in \Cref{fig:GHZ_FR_leegwater_BafterC}. In this case, Ursula, for example, cannot apply the Born rule in the Relativistic Quantum Mechanics assumption from \cite{leegwater2022greenberger} to $u,b,c$, as $b$ and $c$ never occur simultaneously in some relativistic inertial frame. However, as Charlie's information is not destroyed when Bob performs a measurement, Ursula can still apply {\possBorn}. Thus even in this case we can still run the GHZ--FR protocol, and arrive at a paradox. This situation suggests that the GHZ--FR paradox perhaps does not have so much to do with simultaneity in different relativistic frames.

\begin{figure}[H]
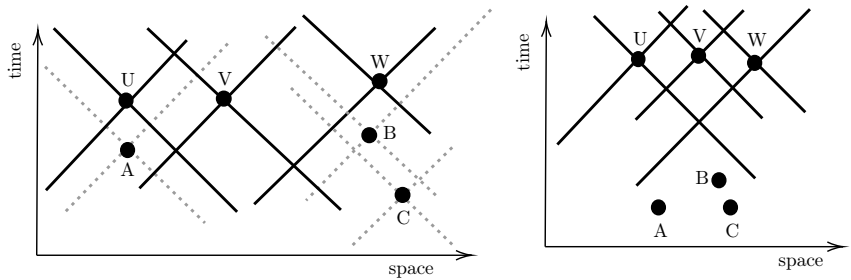
  \centering 
\includestandalone[width=0.8\textwidth]{tikzfigures/tikz_GHZ_leegwater_BafterC}  \caption{Two lightcone structures in which a paradox occurs in the GHZ--FR protocol, but to which the Relativistic Quantum Mechanics argument of \cite{leegwater2022greenberger} cannot be applied to obtain a paradox, as $B$ occurs after $C$ and thus $B,C$ never occur simultaneously in the same inertial frame.} \label{fig:GHZ_FR_leegwater_BafterC} 
\end{figure}

\newpage

\nocite{apsrev42Control}

\bibliography{refs}

@CONTROL{apsrev42Control,
  title = "1",
  pages = "1",
  year  = "1",
  author = "08"
}

@article{frauchiger2018quantum,
  title={Quantum theory cannot consistently describe the use of itself},
  author={Frauchiger, Daniela and Renner, Renato},
  journal={Nat. Commun.},
  volume={9},
  number={1},
  pages={3711},
  year={2018},
  publisher={Nature Publishing Group UK London},
  doi={10.1038/s41467-018-05739-8}
}

@incollection{wigner1995remarks,
  author    = {E. P. Wigner},
  title     = {Remarks on the Mind-Body Question},
  booktitle = {The Scientist Speculates},
  editor    = {I. J. Good},
  publisher = {Basic Books},
  address   = {New York},
  year      = {1961},
  chapter   = {13},
  pages     = {284--302},
  doi={10.1007/978-3-642-78374-6_20}
}

@misc{baumann2016measurement,
  title={The measurement problem is the measurement problem is the measurement problem},
  author={Baumann, Veronika and Hansen, Arne and Wolf, Stefan},
  archivePrefix={arXiv},
primaryClass = {quant-ph},
    eprint = {1611.01111},
  year={2016}
}

@misc{vilasini2022general,
  title={A general framework for consistent logical reasoning in {W}igner's friend scenarios: subjective perspectives of agents within a single quantum circuit},
  author={Vilasini, V and Woods, Mischa P},
  archivePrefix={arXiv},
    eprint ={2209.09281},
    primaryClass = {quant-ph},
  year={2022}
}

@misc{walleghem2024stunnedsleepingbeautyprince,
      title={Stunned by Sleeping Beauty: How Prince Probability updates his forecast upon their fateful encounter}, 
      author={Laurens Walleghem},
      year={2024},
      eprint={2408.06797},
      archivePrefix={arXiv},
      primaryClass={math.PR}}

@article{kastner2020unitary,
  title={Unitary-only quantum theory cannot consistently describe the use of itself: On the {F}rauchiger--{R}enner paradox},
  author={Kastner, RE},
  journal={Found. Phys.},
  volume={50},
  number={5},
  pages={441--456},
  year={2020},
  publisher={Springer},
doi = {10.1007/s10701-020-00336-6}
}

@article{vilasini2019multi,
  title={Multi-agent paradoxes beyond quantum theory},
  author={Vilasini, Venkatesh and Nurgalieva, Nuriya and del Rio, L{\'i}dia},
  journal={New J. Phys.},
  volume={21},
  number={11},
  pages={113028},
  year={2019},
  publisher={IOP Publishing},
doi  = {10.1088/1367-2630/ab4fc4}
}

@article{brukner2018no,
  title={A no-go theorem for observer-independent facts},
  author={Brukner, {\v{C}}aslav},
  journal={Entropy},
  volume={20},
  number={5},
  pages={350},
  year={2018},
  publisher={MDPI},
 doi = {10.3390/e20050350}
}

@article{hardy1993nonlocality,
  title={Nonlocality for two particles without inequalities for almost all entangled states},
  author={Hardy, Lucien},
  journal={Phys. Rev. Lett.},
  volume={71},
  pages={1665},
  year={1993},
  publisher={APS},
doi = {10.1103/PhysRevLett.71.1665}
}

@misc{drezet2018wigner,
  title={About {W}igner {F}riend's and {H}ardy's paradox in a {B}ohmian approach: a comment of {`}{Q}uantum theory cannot consistently describe the use of itself{'}},
  author={Drezet, Aur{\'e}lien},
  archivePrefix={arXiv},
    eprint={1810.10917},
    primaryClass = {quant-ph},
  year={2018}
}

@article{Wagner_2024,
   title={Inequalities witnessing coherence, nonlocality, and contextuality},
   ISSN={2469-9934},
   url={http://dx.doi.org/10.1103/PhysRevA.109.032220},
   DOI={10.1103/physreva.109.032220},
   volume={109},
   pages={032220},
   journal={Phys. Rev. A},
   publisher={American Physical Society (APS)},
   author={Wagner, Rafael and Barbosa, Rui Soares and Galvão, Ernesto F.},
   year={2024},
   month=Mar }

@misc{aaronson2018s,
  title={It's hard to think when someone {H}adamards your brain},
  author={Aaronson, Scott},
  journal={Blog Shtetl Optimized},
  year={2018},
  note={blog post},
  url={https://scottaaronson.blog/?p=3975}
}

@misc{montanhano2023contextuality,
  title={Contextuality in multi-agent paradoxes},
  author={Montanhano, Sidiney B},
  archivePrefix={arXiv},
eprint = {2305.07792},
primaryClass = {quant-ph},
  year={2023}
}

@misc{fortin2019wigner,
  title={Wigner and his many friends: {A} new no-go result?},
  author={Fortin, Sebastian and Lombardi, Olimpia},
  archivePrefix={arXiv},
eprint = {1904.07412},
primaryClass = {quant-ph},
  year={2019}
}

@incollection{greenberger1989going,
  author = {Greenberger, Daniel M. and Horne, Michael A. and Zeilinger, Anton},
  title = {Going beyond {B}ell's theorem},
  booktitle = {{B}ell's theorem, quantum theory, and conceptions of the universe},
  editor = {Kafatos, Menas},
  series = {Fundamental Theories of Physics},
  volume={37},
  pages = {69--72},
  year = {1989},
  publisher = {Kluwer},
  address = {Dordrecht},
  doi={10.1007/978-94-017-0849-4_10},
}

@article{mermin1990simple,
  title={Simple unified form for the major no-hidden-variables theorems},
  author={Mermin, N David},
  journal={Phys. Rev. Lett.},
  volume={65},
  number={27},
  pages={3373},
  year={1990},
  publisher={APS},
doi = {10.1103/PhysRevLett.65.3373}
}

@article{leegwater2022greenberger,
  title={When {G}reenberger, {H}orne and {Z}eilinger meet {W}igner's Friend},
  author={Leegwater, Gijs},
  journal={Found. Phys.},
  volume={52},
  number={4},
  pages={68},
  year={2022},
  publisher={Springer},
doi = {10.1007/s10701-022-00586-6}
}

@article{bong2020strong,
  title={A strong no-go theorem on the {W}igner's friend paradox},
  author={Bong, Kok-Wei and Utreras-Alarc{\'o}n, An{\'i}bal and Ghafari, Farzad and Liang, Yeong-Cherng and Tischler, Nora and Cavalcanti, Eric G and Pryde, Geoff J and Wiseman, Howard M},
  journal={Nat. Phys.},
  volume={16},
  number={12},
  pages={1199--1205},
  year={2020},
  publisher={Nature Publishing Group},
doi = {10.1038/s41567-020-0990-x}
}

@misc{cavalcanti2023consistency,
  title={On the consistency of relative facts},
  author={Cavalcanti, Eric G and Di Biagio, Andrea and Rovelli, Carlo},
  primaryClass={quant-ph}, 
  archivePrefix={arXiv}, 
  eprint={2305.07343},
  year={2023}
}

@article{cavalcanti2021implications,
  title={Implications of {L}ocal {F}riendliness violation for quantum causality},
  author={Cavalcanti, Eric G and Wiseman, Howard M},
  journal={Entropy},
  volume={23},
  number={8},
  pages={925},
  year={2021},
  publisher={Multidisciplinary Digital Publishing Institute},
doi = {10.3390/e23080925}
}

@article{del2024reply,
  title={Reply to: Quantum mechanical rules for observed observers and the consistency of quantum theory},
  author={Del Rio, L{\'\i}dia and Renner, Renato},
  journal={Nat. Commun.},
  volume={15},
  number={1},
  pages={3024},
  year={2024},
  doi ={10.1038/s41467-024-47172-0},
  publisher={Nature Publishing Group UK London}
}

@inbook{Dourdent_2021,
   title={A {G}ödelian Hunch from Quantum Theory},
   ISBN={9783030703547},
   ISSN={2197-6619},
   url={http://dx.doi.org/10.1007/978-3-030-70354-7_7},
   DOI={10.1007/978-3-030-70354-7_7},
   booktitle={Undecidability, Uncomputability, and Unpredictability},
   publisher={Springer International Publishing},
   author={Dourdent, Hippolyte},
   year={2021},
   pages={97–113} }

@misc{bub2018defensesingleworldinterpretationquantum,
      title={In Defense of a "Single-World" Interpretation of Quantum Mechanics}, 
      author={Jeffrey Bub},
      year={2018},
      eprint={1804.03267},
      archivePrefix={arXiv},
      primaryClass={quant-ph}}

@misc{schmid2023review,
      title={A review and analysis of six extended {W}igner's friend arguments}, 
      author={David Schmid and Yìlè Yīng and Matthew Leifer},
      year={2023},
      eprint={2308.16220},
doi = {10.48550/arXiv.2308.16220},
      archivePrefix={arXiv},
      primaryClass={quant-ph}
}

@article{haddara2022possibilistic,
   title={A possibilistic no-go theorem on the {W}igner’s friend paradox},
  author={Haddara, Marwan and Cavalcanti, Eric G},
  journal={New J. Phys.},
  volume={25},
  number={9},
  pages={093028},
  year={2023},
doi = {10.1088/1367-2630/aceea3}
}

@article{ying2023relating,
      title={Relating {W}igner's Friend scenarios to Nonclassical Causal Compatibility, Monogamy Relations, and Fine Tuning}, 
      author={Yìlè Yīng and Marina Maciel Ansanelli and Andrea Di Biagio and Elie Wolfe and Eric Gama Cavalcanti},
   volume={8},
   ISSN={2521-327X},
   url={http://dx.doi.org/10.22331/q-2024-09-26-1485},
   DOI={10.22331/q-2024-09-26-1485},
   journal={Quantum},
   publisher={Verein zur Forderung des Open Access Publizierens in den Quantenwissenschaften},
   year={2024},
   month=sep, pages={1485} }

@misc{ormrod2022no,
  title={A no-go theorem for absolute observed events without inequalities or modal logic},
  author={Ormrod, Nick and Barrett, Jonathan},
  archivePrefix={arXiv},
eprint = {2209.03940},
primaryClass = {quant-ph},
  year={2022}
}

@misc{ormrod2023theories,
  title={Which theories have a measurement problem?},
  author={Ormrod, Nick and Vilasini, V and Barrett, Jonathan},
  archivePrefix={arXiv},
eprint={2303.03353},
primaryClass = {quant-ph},
  year={2023}
}

@misc{nurgalieva2018inadequacy,
  title={Inadequacy of modal logic in quantum settings},
  author={Nurgalieva, Nuriya and del Rio, L{\'i}dia},
  archivePrefix={arXiv},
eprint = {1804.01106},
primaryClass = {quant-ph},
  year={2018}
}

@article{zukowski2021physics,
  title={Physics and metaphysics of {W}igner's friends: {E}ven performed premeasurements have no results},
  author={{\.Z}ukowski, Marek and Markiewicz, Marcin},
  journal={Phys. Rev. Lett.},
  volume={126},
  number={13},
  pages={130402},
  year={2021},
  publisher={APS},
doi = {10.1103/PhysRevLett.126.130402}
}

@article{walleghem2023extended,
  title={Extended {W}igner's friend paradoxes do not require nonlocal correlations},
  author={Walleghem, Laurens and Wagner, Rafael and Schmid, David and Y{\=i}ng, Y{\`i}l{\`e}},
   volume={112},
   ISSN={2469-9934},
   doi={10.1103/n4hv-rlgj},
   pages={022212},
   journal={Phys. Rev. A},
   publisher={American Physical Society (APS)},
   year={2025},
   month=aug }

@article{guerin2020no,
  title={A no-go theorem for the persistent reality of {W}igner's friend's perception},
  author={Gu{\'e}rin, Philippe Allard and Baumann, Veronika and Del Santo, Flavio and Brukner, {\v{C}}aslav},
  journal={Commun. Phys.},
  volume={4},
  number={1},
  pages={93},
  year={2021},
  publisher={Nature Publishing Group UK London},
doi = {10.1038/s42005-021-00589-1}
}

@misc{gao2019quantum,
  title={Quantum theory is incompatible with relativity: A new proof beyond {B}ell's theorem and a test of unitary quantum theories},
  author={Gao, Shan},
  year={2019},
  note={preprint},
url = {https://philpapers.org/rec/GAOUQT}
}

@misc{relano2018decoherence,
  title={Decoherence allows quantum theory to describe the use of itself},
  author={Rela{\~n}o, Armando},
  archivePrefix={arXiv},
eprint={1810.07065},
primaryClass = {quant-ph},
  year={2018}
}

@article{relano2020decoherence,
  title={Decoherence framework for {W}igner's-friend experiments},
  author={Rela{\~n}o, Armando},
  journal={Phys. Rev. A},
  volume={101},
  number={3},
  pages={032107},
  year={2020},
  publisher={APS},
doi = {10.1103/PhysRevA.101.032107}
}

@article{gambini2019single,
  title = {Single-world consistent interpretation of quantum mechanics from fundamental time and length uncertainties},
  author = {Gambini, Rodolfo and Garc{\'i}a-Pintos, Luis Pedro and Pullin, Jorge},
  journal = {Phys. Rev. A},
  volume = {100},
  issue = {1},
  pages = {012113},
  numpages = {12},
  year = {2019},
  publisher = {American Physical Society},
  doi = {10.1103/PhysRevA.100.012113},
  url = {https://link.aps.org/doi/10.1103/PhysRevA.100.012113}}

@article{lazarovici2019quantum,
  title={How {Q}uantum {M}echanics can consistently describe the use of itself},
  author={Lazarovici, Dustin and Hubert, Mario},
  journal={Sci. Rep.},
  volume={9},
  number={1},
  pages={470},
  year={2019},
  publisher={Nature Publishing Group UK London},
doi = {10.1038/s41598-018-37535-1}
}

@misc{sudbery2019hidden,
  title={The hidden assumptions of {F}rauchiger and {R}enner},
  author={Sudbery, Anthony},
  primaryClass={quant-ph}, archivePrefix={arXiv}, eprint={1905.13248},
  year={2019}
}

@article{sudbery2017single,
  title={Single-world theory of the extended {W}igner's friend experiment},
  author={Sudbery, Anthony},
  journal={Found. Phys.},
  volume={47},
  number={5},
  pages={658--669},
  year={2017},
  publisher={Springer},
 doi = {10.1007/s10701-017-0082-7}
}

@article{deutsch1985quantum,
  title={Quantum theory, the {C}hurch--{T}uring principle and the universal quantum computer},
  author={Deutsch, David},
  journal={Proc. R. Soc. Lond. A},
  volume={400},
  number={1818},
  pages={97--117},
  year={1985},
  publisher={The Royal Society London},
doi = {10.1098/rspa.1985.0070}
}

@article{nurgalieva2020testing,
  title={Testing quantum theory with thought experiments},
  author={Nurgalieva, Nuriya and Renner, Renato},
  journal={Contemporary Physics},
  volume={61},
  number={3},
  pages={193--216},
  year={2020},
  publisher={Taylor and Francis},
doi = {10.1080/00107514.2021.1880075}
}

@article{brukner2017quantum,
  title={On the quantum measurement problem},
  author={Brukner, {\v{C}}aslav},
  journal={Quantum [Un] Speakables II: Half a Century of Bell's Theorem},
  pages={95--117},
  year={2017},
  publisher={Springer},
doi = {10.48550/arXiv.1507.05255}
}

@misc{szangolies2020quantum,
  title={The Quantum {R}ashomon Effect: {A} Strengthened {F}rauchiger-{R}enner Argument},
  author={Szangolies, Jochen},
  primaryClass={quant-ph}, archivePrefix={arXiv}, eprint={2011.12716},
  year={2020}
}

@misc{utreras2023allowing,
  title={Allowing {W}igner's friend to sequentially measure incompatible observables},
  author={Utreras-Alarc{\'o}n, An{\'i}bal and Cavalcanti, Eric G and Wiseman, Howard M},
  primaryClass={quant-ph}, archivePrefix={arXiv}, eprint={2305.09102},
  year={2023}
}

@article{healey2018quantum,
  title={Quantum theory and the limits of objectivity},
  author={Healey, Richard},
  journal={Found. Phys.},
  volume={48},
  pages={1568--1589},
  year={2018},
  publisher={Springer},
doi ={10.1007/s10701-018-0216-6}
}

@article{abramsky2017contextual,
  title={Contextual fraction as a measure of contextuality},
  author={Abramsky, Samson and Barbosa, Rui Soares and Mansfield, Shane},
  journal={Phys. Rev. Lett.},
  volume={119},
  number={5},
  pages={050504},
  year={2017},
  publisher={APS},
doi = {10.1103/PhysRevLett.119.050504}
}

@article{abramsky2011sheaf,
  title={The sheaf-theoretic structure of non-locality and contextuality},
  author={Abramsky, Samson and Brandenburger, Adam},
  journal={New J. Phys.},
  volume={13},
  number={11},
  pages={113036},
  year={2011},
  publisher={IOP Publishing},
 doi ={10.1088/1367-2630/13/11/113036}
}

@inproceedings{abramsky2015contextuality,
  author={Abramsky, Samson and Barbosa, Rui Soares and Kishida, Kohei and Lal, Raymond and Mansfield, Shane},
  title =	{{Contextuality, {C}ohomology and {P}aradox}},
  booktitle =	{24th EACSL Annual Conference on Computer Science Logic (CSL 2015)},
  pages =	{211--228},
  series =	{Leibniz International Proceedings in Informatics (LIPIcs)},
  ISBN =	{978-3-939897-90-3},
  ISSN =	{1868-8969},
  year =	{2015},
  volume =	{41},
  editor =	{Kreutzer, Stephan},
  publisher =	{Schloss Dagstuhl -- Leibniz-Zentrum f{\"u}r Informatik},
  address =	{Dagstuhl, Germany},
  URL =		{https://drops.dagstuhl.de/entities/document/10.4230/LIPIcs.CSL.2015.211},
  URN =		{urn:nbn:de:0030-drops-54166},
  doi =		{10.4230/LIPIcs.CSL.2015.211},
  annote =	{Keywords: Quantum mechanics, contextuality, sheaf theory, cohomology, logical paradoxes}
}

@phdthesis{barbosa2015contextuality,
  title={Contextuality in quantum mechanics and beyond},
  author={Barbosa, Rui Soares},
  year={2015},
  type = {{DPhil} thesis},
  school={University of Oxford}
}

@article{mermin1990quantum,
  title={Quantum mysteries revisited},
  author={Mermin, N David},
  journal={Am. J. Phys.},
  volume={58},
  number={8},
  pages={731--734},
  year={1990},
  publisher={American Association of Physics Teachers},
  doi = {10.1119/1.16503}
}

@article{liang2010specker,
title = {Specker’s parable of the overprotective seer: {A} road to contextuality, nonlocality and complementarity},
journal = {Phys. Rep.},
volume = {506},
number = {1},
pages = {1-39},
year = {2011},
issn = {0370-1573},
doi = {https://doi.org/10.1016/j.physrep.2011.05.001},
url = {https://www.sciencedirect.com/science/article/pii/S0370157311001517},
author = {Yeong-Cherng Liang and Robert W. Spekkens and Howard M. Wiseman}
}

@article{oreshkov2012quantum,
  title={Quantum correlations with no causal order},
  author={Oreshkov, Ognyan and Costa, Fabio and Brukner, {\v{C}}aslav},
  journal={Nat. Commun.},
  volume={3},
  number={1},
  pages={1092},
  year={2012},
  publisher={Nature Publishing Group UK London},
 doi = {10.1038/ncomms2076}
}

@book{coecke_kissinger_2017, place={Cambridge}, title={Picturing {Q}uantum {P}rocesses: {A} {F}irst {C}ourse in {Q}uantum {T}heory and {D}iagrammatic {R}easoning}, DOI={10.1017/9781316219317}, publisher={Cambridge University Press}, author={Coecke, Bob and Kissinger, Aleks}, year={2017}}

@article{abramsky2009categorical,
  title={Categorical quantum mechanics},
  author={Abramsky, Samson and Coecke, Bob},
  journal={Handbook of quantum logic and quantum structures},
  volume={2},
  pages={261--325},
  year={2009},
  publisher={Elsevier Amsterdam, The Netherlands}, 
 doi = {10.48550/arXiv.0808.1023}
}

@article{peres1978unperformed,
  title={Unperformed experiments have no results},
  author={Peres, Asher},
  journal={Am. J. Phys.},
  volume={46},
  number={7},
  pages={745--747},
  year={1978},
  publisher={American Association of Physics Teachers},
  doi = {10.1119/1.11393}
}

@article{debrota2020respecting,
  title={Respecting one’s fellow: {QB}ism’s analysis of {W}igner’s friend},
  author={DeBrota, John B and Fuchs, Christopher A and Schack, R{\"u}diger},
  journal={Found. Phys.},
  volume={50},
  pages={1859--1874},
  year={2020},
  publisher={Springer},
  doi={10.1007/s10701-020-00369-x}
}

@article{allam2023observer,
  title={From observer-dependent facts to frame-dependent measurement records in {W}igner friend scenarios},
  author={Allam, J and Matzkin, A},
  journal={Europhys. Lett.},
  volume={143},
  number={6},
  pages={60001},
  year={2023},
  publisher={IOP Publishing}, 
  doi={10.1209/0295-5075/acfbf4}
}

@misc{brukner2021qubits,
  title={Qubits are not observers--a no-go theorem},
  author={Brukner, {\v{C}}aslav},
  primaryClass={quant-ph}, archivePrefix={arXiv}, eprint={2107.03513},
  year={2021}
}

@misc{tausk2018brief,
  title={A brief introduction to the foundations of quantum theory and an analysis of the {F}rauchiger-{R}enner paradox},
  author={Tausk, Daniel V},
  primaryClass={quant-ph}, archivePrefix={arXiv}, eprint={1812.11140},
  year={2018}
}

@article{bub2021understanding,
  title={Understanding the {F}rauchiger--{R}enner argument},
  author={Bub, Jeffrey},
  journal={Found. Phys.},
  volume={51},
  number={2},
  pages={36},
  year={2021},
  publisher={Springer},
  doi = {10.1007/s10701-021-00420-5}
}

@article{mucino2020wigner,
  title={Wigner's convoluted friends},
  author={Muci{\~n}o, R and Okon, E},
  journal={Stud. Hist. Philos. Sci. B},
  volume={72},
  pages={87--90},
  year={2020},
  publisher={Elsevier},
  doi = {10.1016/j.shpsb.2020.07.001}
}

@book{fagin2004reasoning,
  title={Reasoning about knowledge},
  author={Fagin, Ronald and Halpern, Joseph Y and Moses, Yoram and Vardi, Moshe},
  year={2004},
  publisher={MIT press}
}

@book{blackburn2001modal,
  title={Modal logic: graph. {D}arst},
  author={Blackburn, Patrick and De Rijke, Maarten and Venema, Yde},
  volume={53},
  year={2001},
  publisher={Cambridge University Press},
  doi={10.1017/CBO9781107050884}
}

@article{boge2019quantum,
  title={Quantum information versus epistemic logic: {A}n analysis of the {F}rauchiger--{R}enner theorem},
  author={Boge, Florian J},
  journal={Found. Phys.},
  volume={49},
  number={10},
  pages={1143--1165},
  year={2019},
  publisher={Springer}, 
  doi = {10.1007/s10701-019-00298-4}
}

@article{fraser2020fitch,
author = {Fraser, Patrick and Nurgalieva, Nuriya and Rio, Lídia},
year = {2023},
month = {08},
pages = {},
title = {Fitch's knowability axioms are incompatible with quantum theory},
volume = {52},
journal = {J. Philos. Logic},
doi = {10.1007/s10992-023-09717-4}
}

@article{baltag2023logic,
  title={Logic meets {W}igner’s {F}riend (and their {F}riends)},
  author={Baltag, Alexandru and Smets, Sonja},
  journal={Int. J. Theor. Phys.},
  volume={63},
  number={4},
  pages={97},
  year={2024},
  publisher={Springer}, 
  doi={10.1007/s10773-024-05611-0}
}

@article{corti2023logico,
  title={A {L}ogico-{E}pistemic {I}nvestigation of {F}rauchiger and {R}enner’s Paradox},
  author={Corti, Alberto and Fano, Vincenzo and Tarozzi, Gino},
  journal={Int. J. Theor. Phys.},
  volume={62},
  number={3},
  pages={54},
  year={2023},
  publisher={Springer},
  doi={10.1007/s10773-023-05313-z}
}

@misc{araujo2018flaw,
  title={The flaw in {F}rauchiger and {R}enner's argument},
  author={Ara{\'u}jo, M},
  year={2018}, 
  url = {https://mateusaraujo.info/2018/10/24/the-flaw-in-frauchiger-and-renners-argument/}
}

@article{di2021stable,
  title={Stable facts, relative facts},
  author={Di Biagio, Andrea and Rovelli, Carlo},
  journal={Found. Phys.},
  volume={51},
  pages={1--13},
  year={2021},
  publisher={Springer}, 
  doi={10.1007/s10701-021-00429-w}
}

@article{losada2019frauchiger,
  title={{F}rauchiger-{R}enner argument and quantum histories},
  author={Losada, Marcelo and Laura, Roberto and Lombardi, Olimpia},
  journal={Phys. Rev. A},
  volume={100},
  number={5},
  pages={052114},
  year={2019},
  publisher={APS}, 
  doi={10.1103/PhysRevA.100.052114}
}

@article{schlosshauer2019quantum,
  title={Quantum decoherence},
  author={Schlosshauer, Maximilian},
  journal={Phys. Rep.},
  volume={831},
  pages={1--57},
  year={2019},
  publisher={Elsevier}, 
  doi={10.1016/j.physrep.2019.10.001}
}

@article{schlosshauer2005decoherence,
  title = {Decoherence, the measurement problem, and interpretations of quantum mechanics},
  author = {Schlosshauer, Maximilian},
  journal = {Rev. Mod. Phys.},
  volume = {76},
  issue = {4},
  pages = {1267--1305},
  numpages = {0},
  year = {2005},
  publisher = {American Physical Society},
  doi = {10.1103/RevModPhys.76.1267},
  url = {https://link.aps.org/doi/10.1103/RevModPhys.76.1267}
}

@book{schlosshauer2007quantum,
  title={Decoherence: the {Q}uantum-to-{C}lassical {T}ransition},
  publisher = {Springer},
  address   = {Berlin and Heidelberg},
  year      = {2007},
  author={Schlosshauer, Maximilian},
  doi ={10.1007/978-3-540-35775-9}
}

@article{bassi2003dynamical,
  title={Dynamical reduction models},
  author={Bassi, Angelo and Ghirardi, GianCarlo},
  journal={Phys. Rep.},
  volume={379},
  number={5-6},
  pages={257--426},
  year={2003},
  publisher={Elsevier},
  doi={10.1016/S0370-1573%2803%2900103-0}
}

@article{penrose1996gravity,
  title={On gravity's role in quantum state reduction},
  author={Penrose, Roger},
  journal={Gen. Relativ. Gravit.},
  volume={28},
  pages={581--600},
  year={1996},
  publisher={Springer}, 
  doi={10.1007/BF02105068}
}

@article{diosi1987universal,
  title={A universal master equation for the gravitational violation of quantum mechanics},
  author={Diosi, Lajos},
  journal={Phys. Lett. A},
  volume={120},
  number={8},
  pages={377--381},
  year={1987},
  publisher={Elsevier},
  doi={10.1016/0375-9601(87)90681-5}
}

@inproceedings{ghirardi1985model,
  title={A model for a unified quantum description of macroscopic and microscopic systems},
  author={Ghirardi, Gian Carlo and Rimini, Alberto and Weber, Tullio},
  booktitle={Quantum Probability and Applications II: Proceedings of a Workshop held in Heidelberg, West Germany, October 1--5, 1984},
  pages={223--232},
  year={1985},
  organization={Springer},
  doi={10.1007/BFb0074474}
}

@article{ghirardi1986unified,
  title={Unified dynamics for microscopic and macroscopic systems},
  author={Ghirardi, Gian Carlo and Rimini, Alberto and Weber, Tullio},
  journal={Phys. Rev. D},
  volume={34},
  number={2},
  pages={470--491},
  year={1986},
  publisher={APS},
  doi = {10.1103/PhysRevD.34.470}
}

@article{forstner2020nanomechanical,
  title={Nanomechanical test of quantum linearity},
  author={Forstner, Stefan and Zych, Magdalena and Basiri-Esfahani, Sahar and Khosla, Kiran E and Bowen, Warwick P},
  journal={Optica},
  volume={7},
  number={10},
  pages={1427--1434},
  year={2020},
  publisher={Optica Publishing Group},
  doi={10.1364/OPTICA.391671}
}

@article{arnquist2022search,
  title={Search for spontaneous radiation from wave function collapse in the majorana demonstrator},
  author={Arnquist, IJ and Avignone III, FT and Barabash, AS and Barton, CJ and Bhimani, KH and Blalock, E and Bos, B and Busch, M and Buuck, M and Caldwell, TS and others},
  journal={Phys. Rev. Lett.},
  volume={129},
  number={8},
  pages={080401},
  year={2022},
  doi = {10.1103/PhysRevLett.129.080401},
  publisher={APS}
}

@article{donadi2021novel,
  title={Novel {CSL} bounds from the noise-induced radiation emission from atoms},
  author={Donadi, Sandro and Piscicchia, Kristian and Del Grande, Raffaele and Curceanu, Catalina and Laubenstein, Matthias and Bassi, Angelo},
  journal={Eur. Phys. J. C},
  volume={81},
  number={8},
  pages={1--10},
  year={2021},
  publisher={Springer}, 
  doi={10.1140/epjc/s10052-021-09556-0}
}

@article{carlesso2022present,
  title={Present status and future challenges of non-interferometric tests of collapse models},
  author={Carlesso, Matteo and Donadi, Sandro and Ferialdi, Luca and Paternostro, Mauro and Ulbricht, Hendrik and Bassi, Angelo},
  journal={Nat. Phys.},
  volume={18},
  number={3},
  pages={243--250},
  year={2022},
  doi={10.1038/s41567-021-01489-5},
  publisher={Nature Publishing Group UK London}
}

@misc{renes2021consistency,
  title={Consistency in the description of quantum measurement: {Q}uantum theory can consistently describe the use of itself},
  author={Renes, Joseph M},
  primaryClass={quant-ph}, 
  archivePrefix={arXiv}, 
  eprint={2107.02193},
  year={2021}
}

@misc{narasimhachar2020agents,
  title={Agents governed by quantum mechanics can use it intersubjectively and consistently},
  author={Narasimhachar, Varun},
  primaryClass={quant-ph}, archivePrefix={arXiv}, eprint={2010.01167},
  year={2020}
}

@article{polychronakos2022quantum,
title={Quantum mechanical rules for observed observers and the consistency of quantum theory},
  author={Polychronakos, Alexios P},
  journal={Nat. Commun.},
  volume={15},
  number={1},
  pages={3023},
  year={2024},
  publisher={Nature Publishing Group UK London},
  doi ={10.1038/s41467-024-47170-2}
}

@misc{adlam2022information,
  title={Information is physical: {C}ross-perspective links in relational quantum mechanics},
  author={Adlam, Emily and Rovelli, Carlo},
  primaryClass={quant-ph}, archivePrefix={arXiv}, eprint={2203.13342},
  year={2022}
}

@article{lawrence2023relative,
  title={Relative facts of relational quantum mechanics are incompatible with quantum mechanics},
  author={Lawrence, Jay and Markiewicz, Marcin and {\.Z}ukowski, Marek},
  journal={Quantum},
  volume={7},
  pages={1015},
  year={2023},
  publisher={Verein zur F{\"o}rderung des Open Access Publizierens in den Quantenwissenschaften},
  doi ={10.22331/q-2023-05-23-1015}
}

@article{rovelli1996relational,
  title={Relational quantum mechanics},
  author={Rovelli, Carlo},
  journal={Int. J. Theor. Phys.},
  volume={35},
  pages={1637--1678},
  year={1996},
  publisher={Springer},
  doi={10.1007/BF02302261}
}

@article{rovelli2018space,
  title={Space is blue and birds fly through it},
  author={Rovelli, Carlo},
  journal={Philos. Trans. R. Soc. A},
  volume={376},
  number={2123},
  pages={20170312},
  year={2018},
  publisher={The Royal Society Publishing},
  doi ={10.1098/rsta.2017.0312}
}

@article{walleghem2024connecting,
  title = {Connecting extended {W}igner’s friend arguments and noncontextuality},
  author = {Walleghem, Laurens and  Y{\=i}ng, Y{\`i}l{\`e} and Wagner, Rafael and Schmid, David},
   volume={9},
   ISSN={2521-327X},
   url={http://dx.doi.org/10.22331/q-2025-07-31-1819},
   DOI={10.22331/q-2025-07-31-1819},
   journal={Quantum},
   publisher={Verein zur Forderung des Open Access Publizierens in den Quantenwissenschaften},
   year={2025},
   month=jul, pages={1819} }

@article{cavalcanti2021view,
  title={The view from a {W}igner bubble},
  author={Cavalcanti, Eric G},
  journal={Found. Phys.},
  volume={51},
  number={2},
  pages={39},
  year={2021},
  publisher={Springer}, 
  doi={10.1007/s10701-021-00417-0}
}

@misc{ormrod2024quantum,
      title={Quantum influences and event relativity}, 
      author={Nick Ormrod and Jonathan Barrett},
      year={2024},
      eprint={2401.18005},
      archivePrefix={arXiv},
      primaryClass={quant-ph}
}

@article{martin2023fact,
  title={Fact-nets: towards a mathematical framework for relational quantum mechanics},
  author={Martin-Dussaud, Pierre and Carette, Titouan and G{\l}owacki, Jan and Zatloukal, Vaclav and Zalamea, Federico},
  journal={Found. Phys.},
  volume={53},
  number={1},
  pages={26},
  year={2023},
  publisher={Springer}, 
  doi={10.1007/s10701-022-00653-y}
}

@article{de2020quantum,
  title={Quantum reference frames for general symmetry groups},
  author={de la Hamette, Anne-Catherine and Galley, Thomas D},
  journal={Quantum},
  volume={4},
  pages={367},
  year={2020},
  publisher={Verein zur F{\"o}rderung des Open Access Publizierens in den Quantenwissenschaften},
  doi={10.22331/q-2020-11-30-367}
}

@article{harrigan2010einstein,
  title={Einstein, incompleteness, and the epistemic view of quantum states},
  author={Harrigan, Nicholas and Spekkens, Robert W},
  journal={Found. Phys.},
  volume={40},
  pages={125--157},
  year={2010},
  publisher={Springer},
doi = {10.1007/s10701-009-9347-0}
}

@article{pusey2012reality,
  title={On the reality of the quantum state},
  author={Pusey, Matthew F and Barrett, Jonathan and Rudolph, Terry},
  journal={Nat. Phys.},
  volume={8},
  number={6},
  pages={475--478},
  year={2012},
  publisher={Nature Publishing Group UK London},
 doi  = {10.1038/nphys2309}
}

@article{leifer2014quantum,
  title={Is the quantum state real? {A}n extended review of $\psi$-ontology theorems},
  author={Leifer, Matthew Saul},
  journal ={quanta},
  issue = {3},
  year = {2014},
  pages = {67–155},
  volume={3},
  doi = {10.12743/quanta.v3i1.22}
}

@article{hardy2013quantum,
  title={Are quantum states real?},
  author={Hardy, Lucien},
  journal={Int. J. Mod. Phys. B},
  volume={27},
  number={01n03},
  pages={1345012},
  year={2013},
  publisher={World Scientific}, 
  doi={10.1142/S0217979213450124}
}

@article{guhne2023colloquium,
  title={Colloquium: {I}ncompatible measurements in quantum information science},
  author={G{\"u}hne, Otfried and Haapasalo, Erkka and Kraft, Tristan and Pellonp{\"a}{\"a}, Juha-Pekka and Uola, Roope},
  journal={Rev. Mod. Phys.},
  volume={95},
  number={1},
  pages={011003},
  year={2023},
  publisher={APS}, 
  doi={10.1103/RevModPhys.95.011003}
}

@article{wiseman2007steering,
  title={Steering, entanglement, nonlocality, and the {E}instein-{P}odolsky-{R}osen paradox},
  author={Wiseman, Howard M and Jones, Steve James and Doherty, Andrew C},
  journal={Phys. Rev. Lett.},
  volume={98},
  number={14},
  pages={140402},
  year={2007},
  publisher={APS},
  doi={10.1103/PhysRevLett.98.140402}
}

@inproceedings{schrodinger1935discussion,
  title={Discussion of probability relations between separated systems},
  author={Schr{\"o}dinger, Erwin},
  booktitle={Mathematical Proceedings of the Cambridge Philosophical Society},
  volume={31},
  issue={4},
  pages={555--563},
  year={1935},
  organization={Cambridge University Press},
  doi={10.1017/S0305004100013554}
}

@article{streltsov2017colloquium,
  title={Colloquium: {Q}uantum coherence as a resource},
  author={Streltsov, Alexander and Adesso, Gerardo and Plenio, Martin B},
  journal={Rev. Mod. Phys.},
  volume={89},
  number={4},
  pages={041003},
  year={2017},
  publisher={APS},
  doi={10.1103/RevModPhys.89.041003}
}

@article{galvao2020quantum,
  title={Quantum and classical bounds for two-state overlaps},
  author={Galv{\~a}o, Ernesto F and Brod, Daniel J},
  journal={Phys. Rev. A},
  volume={101},
  number={6},
  pages={062110},
  year={2020},
  publisher={APS}, 
  doi={10.1103/PhysRevA.101.062110}
}

@article{spekkens2005contextuality,
  title={Contextuality for preparations, transformations, and unsharp measurements},
  author={Spekkens, Robert W},
  journal={Phys. Rev. A},
  volume={71},
  number={5},
  pages={052108},
  year={2005},
  publisher={APS},
  doi = {10.1103/PhysRevA.71.052108}
}

@article{schmid2020structure,
  title={A structure theorem for generalized-noncontextual ontological models},
  author={Schmid, David and Selby, John H and Pusey, Matthew F and Spekkens, Robert W},
  journal = {Quantum},
    volume ={8}, 
pages = {1283},
  year={2024},
doi ={10.22331/q-2024-03-14-1283}
}

@misc{jones2024thinking,
      title={On the significance of {W}igner's Friend in contexts beyond quantum foundations}, 
      author={Caroline L. Jones and Markus P. Mueller},
      year={2025},
      eprint={2402.08727},
      archivePrefix={arXiv},
      primaryClass={quant-ph}}

@article{di2021arrow,
  title={The arrow of time in operational formulations of quantum theory},
  author={Di Biagio, Andrea and Don{\`a}, Pietro and Rovelli, Carlo},
  journal={Quantum},
  volume={5},
  pages={520},
  year={2021},
  publisher={Verein zur F{\"o}rderung des Open Access Publizierens in den Quantenwissenschaften},
  doi={10.22331/q-2021-08-09-520}
}

\end{document}